\tikzstyle{new style 0}=[fill=black, draw=black, shape=circle, scale=0.3]
\tikzstyle{new style 1}=[fill={rgb,255: red,128; green,128; blue,128}, draw={rgb,255: red,128; green,128; blue,128}, shape=rectangle, scale=0.3]
\tikzstyle{new style 2}=[fill={rgb,255: red,0; green,128; blue,128}, draw={rgb,255: red,0; green,128; blue,128}, shape=circle]
\tikzstyle{new style 3}=[fill={rgb,255: red,191; green,0; blue,64}, draw={rgb,255: red,191; green,0; blue,64}, shape=circle]
\tikzstyle{new style 4}=[fill=blue, draw=blue, shape=circle]
\tikzstyle{new style 5}=[fill=white, draw=black, shape=circle, minimum size=0.6cm, inner sep=1pt]
\tikzstyle{new style 6}=[fill=magenta, draw=black, shape=rectangle, scale=1]
\tikzstyle{new style 7}=[fill={rgb,255: red,191; green,0; blue,64}, draw=black, shape=rectangle, scale=1]
\tikzstyle{error}=[fill=white, draw=black, shape=starburst, scale=0.5, starburst points=10, starburst point height=0.28cm]
\tikzstyle{red_dark}=[-, fill={rgb,255: red,250; green,129; blue,124}]
\tikzstyle{green_light}=[-, fill={rgb,255: red,240; green,254; blue,241}]
\tikzstyle{blue_light}=[-, fill={rgb,255: red,241; green,242; blue,255}]
\tikzstyle{red_light}=[-, fill={rgb,255: red,254; green,242; blue,241}]
\tikzstyle{green_dark}=[-, fill={rgb,255: red,183; green,214; blue,170}]
\tikzstyle{blue_dark}=[-, fill={rgb,255: red,168; green,194; blue,242}]
\tikzstyle{new edge style 0}=[-, fill={rgb,255: red,191; green,191; blue,191}]
\tikzstyle{new edge style 1}=[-, fill={rgb,255: red,250; green,129; blue,124}, line width=0.025cm, draw={rgb,255: red,220; green,105; blue,105}]
\theoremstyle{definition}
\newtheorem{definition}{Definition}[section]
\newtheorem{theorem}{Theorem}
\newtheorem{lemma}[theorem]{Lemma}
\newtheorem{example}[theorem]{Example}
\newtheorem{corollary}[theorem]{Corollary}
\newcommand{\zz}{\mathbbm{Z}}
\newcommand{\triangleqed}{
  \begingroup
  \renewcommand{\qedsymbol}{\ensuremath{\triangleleft}}
  \qed
  \endgroup
}
\begin{document}

\title{Designing fault-tolerant circuits using detector error models}

\author{Peter-Jan H. S. Derks}\email{peter-janderks@hotmail.com}
\affiliation{Dahlem Center for Complex Quantum Systems, Freie Universit\"at Berlin, 14195 Berlin, Germany}

\author{Alex Townsend-Teague}
\affiliation{Dahlem Center for Complex Quantum Systems, Freie Universit\"at Berlin, 14195 Berlin, Germany}

\author{Ansgar G. Burchards}
\affiliation{Dahlem Center for Complex Quantum Systems, Freie Universit\"at Berlin, 14195 Berlin, Germany}

\author{Jens Eisert}
\affiliation{Dahlem Center for Complex Quantum Systems, Freie Universit\"at Berlin, 14195 Berlin, Germany}

\affiliation{Helmholtz-Zentrum Berlin f{\"u}r Materialien und Energie, 14109 Berlin, Germany}

\date{June 2024}

\maketitle

\begin{abstract}
Quantum error-correcting codes, such as subspace, subsystem, and Floquet codes, are typically constructed within the stabilizer formalism, which does not fully capture the idea of fault-tolerance needed for practical quantum computing applications.
In this 
work, we explore the remarkably powerful formalism
of detector error models, which fully captures fault-tolerance at the circuit level.
We introduce the detector error model formalism in a pedagogical manner and provide several examples.
Additionally, we apply the formalism to three different levels of abstraction in the engineering cycle of fault-tolerant circuit designs: finding robust syndrome extraction circuits, identifying efficient measurement schedules, and constructing fault-tolerant procedures.
We enhance the surface code's resistance to measurement errors, devise short measurement schedules for color codes, and implement a more efficient fault-tolerant method for measuring logical operators.
\end{abstract}

\section{Introduction}

\begin{figure}
\includegraphics[width=\columnwidth]{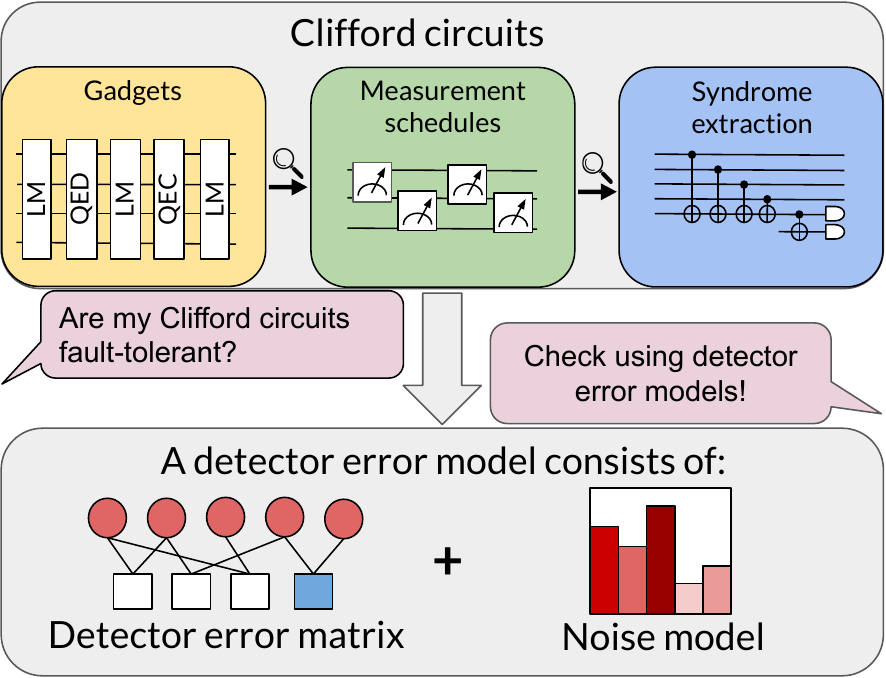}
\caption{Fault-tolerant quantum computation can be performed using fault-tolerant gadgets, which contain measurement schedules, which contain syndrome extraction circuits.
A detector error model contains all information necessary to verify the error-correcting capabilities of Clifford circuits subject to a noise model at these three levels.}
\label{fig:dem_figure}
\end{figure}

In general any quantum computation is subject to noise originating from uncontrollable interactions of the computer with its environment.
In order to perform large-scale quantum computations of arbitrary length and accuracy, resilience to this noise is ultimately required. 
This can be achieved by performing operations on qubits encoded in a \emph{quantum error correction} (QEC) code \cite{roffe2019quantum,RevModPhys.87.307,Roads,Roadmap}. 
On the circuit level, measurements which detect errors in QEC codes can be implemented by entangling additional auxiliary qubits with the data qubits and then measuring the auxiliary qubits. 
Circuits performing this task are referred to as \textit{syndrome extraction circuits} and typically make use of one and two-qubit operations. 
Syndrome extraction circuits must not necessarily use only a single auxiliary qubit and output a single measurement result. 
In fact, usage of multiple auxiliary qubits can improve error-correcting capabilities or relax hardware connectivity requirements \cite{chamberland2020topological}.

However, a combination of a QEC code and syndrome extraction circuit does not yield a fully fault-tolerant quantum computing scheme \cite{gottesman1997stabilizer,Roads}, or even an error-corrected quantum memory \cite{RevModPhys.87.307}.
One way to implement fault-tolerant QEC, is to combine a code with a \textit{measurement schedule}: a schedule that determines which sets of operators to measure in which order.
A measurement schedule can be represented
as a circuit, in which multi-qubit measurement
gates represent syndrome extraction circuits.

In order to go beyond quantum memories and also enable computation on encoded qubits, procedures generating \textit{fault-tolerant gadgets} are required \cite{gottesman1997stabilizer}. 
Gadgets perform operations such as measurement, state preparation, and gate implementation, and satisfy certain rules regarding how errors may spread within the gadget. 
Procedures generating gadgets for arbitrary QEC codes can be represented as circuits.
The structure of the circuit does not depend on the specific QEC code being used, as the details of the syndrome extraction circuits and measurement schedule are abstracted away.
Optimizing these procedures can lead to faster computation with substantially lower overhead.

The noise model one has to consider can differ depending on whether one is designing syndrome extraction circuits, measurement schedules, or fault-tolerant gadgets. 
For syndrome extraction circuits, a circuit-level noise model is required. 
For measurement schedules, if measurements are performed using fault-tolerant syndrome extraction circuits, one only needs to consider a \textit{phenomenological noise model} - one consisting of errors on data qubits in between syndrome extraction circuits and measurement errors.
For fault-tolerant gadgets, even cruder noise models can be used that distinguish errors only by when but not on which qubit they occur.

Circuits at all three levels of abstraction employ Pauli measurements to detect errors. 
A decoder has the task of interpreting the measurement outcomes in order to infer which errors have occurred.
A surprisingly powerful framework for passing this information to a decoder is that of \emph{detector error models} \cite{gidney2021stim, higgott2023sparse}.
By looking at the detector error model one can visualize the error-correcting abilities of a circuit.
This work is aimed at demonstrating the usefulness of detector error models, and their role in devising fault-tolerant quantum computing schemes. 
Specifically, we will show that detector error models can be usefully employed at all three previously mentioned levels of abstraction. 
An illustration of this main idea is shown in Fig.~\ref{fig:dem_figure}.

The remaining part of this work is structured as follows:
Section \ref{sec:intro_detector_error_models} contains an introduction to detector error models where the presentation is kept deliberately simple, providing a pedagogical introduction to the subject. 
In Section \ref{sec:syndrome_extraction_circuits}, we show how detector error models can be used to design syndrome extraction circuits for a device with high measurement noise bias. 
Section \ref{sec:measurement_schedule} employs the same framework to design measurement schedules for color codes while in Section \ref{sec:procedure}, we introduce a more efficient fault-tolerant logical measurement gadget. 
In Section \ref{sec:conclusion} we present an overview of our findings and propose directions for future work.

\section{Representing noisy Clifford circuits as detector error models}
\label{sec:intro_detector_error_models}

A common and useful class of quantum circuits is that of Clifford circuits.
Such circuits consist of Clifford quantum gates, Pauli measurements, classically controlled Pauli gates as well as state preparation in the computational basis. 
In particular, Clifford circuits can be used to implement arbitrary stabilizer quantum error correction codes, meaning that they can prepare codewords as well as perform the required measurements of stabilizers and logical $X, Y$ and $Z$ operators.

\subsection{Detectors}
\label{subsection:detectors}
Clifford circuits can contain measurements of stabilizers, which are Pauli operators of which the qubits are in a $+1$ or $-1$ eigenstate.
Their measurement outcomes are deterministic and can be used as what are called \textit{detectors} in this context.

\begin{definition}[Detector]
A \textit{detector} is a parity constraint on a set of measurement outcomes.
\end{definition}

\begin{example}[Finding detectors] 
$\triangleright$\footnote{We use $\triangleright$ to denote the start of an example, and $\triangleleft$ to denote the end.}
Deterministic measurements can be found by tracking stabilizers through a circuit. 
For a beginner-friendly introduction to stabilizer tracking, see Ref.~\cite{gottesman1998heisenberg}. 
For a description of more resource efficient methods, see Refs.~\cite{PhysRevA.70.052328,PhysRevA.73.022334,gidney2021stim}.
Consider now the circuit 
\begin{equation}
\includegraphics[valign=c]{three\_measurement\_circuit.pdf}.
\end{equation}
The single-qubit measurements in this circuit, as well as throughout the rest of this section, are performed in the Pauli $Z$-basis.
Before any gates in the circuit are applied, the stabilizer group of the input state is given by
\begin{equation}
\langle Z_1, X_2, -X_3 \rangle.
\end{equation}
After application of the first CNOT gate this becomes
\begin{equation}
\langle Z_1 Z_2 , X_1X_2, -X_3\rangle.
\end{equation}
Because the first measurement is $Z_1$, which is not contained in the stabilizer group, the measurement outcome will be non-deterministic.
We denote the measurement result with $m_1$ which assumes the value 0 (1) if the first qubit is measured to be in the +1 (-1) eigenstate of $Z_{1}$.
Following the update rules for the stabilizer group \cite{gottesman1998heisenberg} three updates are made to the stabilizer group; $X_1X_2$ is replaced by $(-1)^{m_1} Z_1$, $Z_1Z_2$ is multiplied by $(-1)^{m_1} Z_1$ and we stop tracking the stabilizers on the first qubit.
The resulting stabilizer group is given by
\begin{equation}
\langle(-1)^{m_1} Z_2, -X_3\rangle.
\end{equation}

The stabilizer group after application of the second CNOT gate is then
\begin{equation}
\langle -X_2 X_3, (-1)^{m_1} Z_2 Z_3\rangle.
\end{equation}
As in the first case, the Pauli operator being measured, $Z_2$, is not contained in the stabilizer group and yields a non-deterministic measurement outcome.
Given the measurement outcome $m_2$, the stabilizer group immediately before the last measurement is 
\begin{equation}
\langle (-1)^{m_1 \oplus m_2} Z_3 \rangle,
\end{equation}
where $\oplus$ denotes binary addition. When $Z_3$ is measured the outcome will thus be deterministic and equal to $m_1 \oplus m_2$. 
We conclude that this circuit has a single detector $d_1$ which represents the constraint that the parity of the first three measurements is even. 
We either denote a detector $d_1$ as a binary sum representing its parity constraint,
as in $d_1: m_1 \oplus m_2 \oplus m_3 = 0$,
or write it in the form of a binary vector
\begin{equation} \mathbf{d}_1 = \begin{bmatrix}
  1 \\
  1 \\
  1   
\end{bmatrix}.
\end{equation}
We denote the measurement outcomes of a circuit as a binary vector
\begin{equation} \mathbf{m} = \begin{bmatrix}
           m_{1} \\
           m_{2} \\
           m_{3}
\end{bmatrix}.
\end{equation}  
\triangleqed
\end{example}
In the example above, we have tracked the stabilizer group to derive detectors, alternatively detectors can be derived graphically using the ZX-calculus \cite{delafuente2024xyzrubycodemaking, McEwen2023relaxinghardware, Bombin_2024, rodatz2024floquetifyingstabilisercodesdistancepreserving}. We say that a detector $\mathbf{d}_i$ is violated by a set of measurement outcomes $\mathbf{m}$ if 
\begin{equation}
\mathbf{d}_i^T \mathbf{m} \neq b_i,
\end{equation}
where $b_i$ denotes the deterministic value assumed during noise-free implementation of the circuit.
Throughout, we assume \(b_i = 0\) for all detectors, without loss of generality.  
To justify this, let $\mathbf{m}^*$ be a fixed vector satisfying  
$\mathbf{d}_i^T \mathbf{m}^* = b_i \quad \forall i.$ Define the shifted vector
$
\tilde{\mathbf{m}} = \mathbf{m} + \mathbf{m}^*.
$  
Then  
$$
\mathbf{d}_i^T \tilde{\mathbf{m}} = \mathbf{d}_i^T \mathbf{m} + b_i,
$$
so the condition $\mathbf{d}_i^T \mathbf{m} = b_i\ \forall i$ is equivalent to $\mathbf{d}_i^T \tilde{\mathbf{m}} = 0\ \forall i$.  
Thus one can assume $b_i = 0$ without loss of generality by replacing $\mathbf{m}$ with $\tilde{\mathbf{m}}$.

\subsection{The detector matrix}
Any given circuit may contain multiple deterministic measurements. 
In general, it is desirable to construct the maximum number of linearly independent detectors in order to improve the fault-tolerant properties of a circuit.
From this point onward, we assume that the maximum number of detectors is used.
Each deterministic measurement will give rise to one linearly independent detector.
A useful representation of the detectors of a circuit is as a \textit{detector matrix}.

\begin{definition}[Detector matrix]
\label{detectormatrix}
A \textit{detector matrix} $D$ is a binary matrix in $\zz_2^{d\times m}$ whose rows are the linearly independent detectors of a circuit.
The number of rows $d$ is equal to the number of deterministic measurements.
The number of columns $m$ is equal to the total number of measurements.
\end{definition}

Note that the kernel of a detector matrix is all possible measurement outcomes in the absence of noise.
Detector matrices have previously been defined as the outcome code of a circuit in  Ref.~\cite{delfosse2023spacetime}. 

\begin{example}[Constructing a detector matrix] \label{rep_code_example} $\triangleright$

\begin{equation}
\includegraphics[valign=c]{repcode\_noiseless\_small.pdf}.
\end{equation}

We will construct the detector matrix for the above circuit and see that it can be used to identify if and where possible errors occurred during the circuit implementation. 
Initially, the three central qubits are stabilized by (among other terms) $Z_2Z_3$ and $Z_3Z_4$.
The first two measurements $M_1, M_2$ measure operators $Z_2Z_3$ and $Z_3Z_4$, a fact that can be expressed via the circuit equality
\begin{equation}
\includegraphics[valign=c]{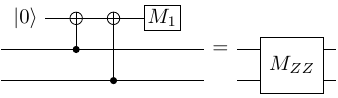}
\end{equation}
where the circuit on the left side of the equation is commonly referred to as a syndrome extraction circuit \cite{roffe2019quantum}.
Because $M_1$ and $M_2$ are measuring stabilizer elements, their outcomes are deterministic and can be used to create the detectors
\begin{align}
d_1\colon m_1 = 0 ,\qquad d_2\colon m_2 = 0 .
\end{align}
Though the following measurement $M_3$ is not deterministic, the subsequent two are, giving rise to a parity constraint. We may therefore 
construct the additional detectors 
\begin{align}
d_3\colon m_3 \oplus m_4 = 0, \qquad d_4\colon m_4 \oplus m_5 = 0 .
\end{align}
All four detectors together are summarized in the form of the detector matrix
\begin{equation}
D_1 \colon=
\begin{bmatrix}
1 & 0 & 0 & 0 & 0 \\
0 & 1 & 0 & 0 & 0 \\
0 & 0 & 1 & 1 & 0 \\
0 & 0 & 0 & 1 & 1 \\
\end{bmatrix}
\label{eq:detector_matrix_1}.
\end{equation}
To see the utility of this matrix, consider the occurrence of an $X$ error just before measurement $M_3$
\begin{equation}
\includegraphics[valign=c]{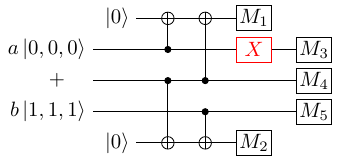}.
\end{equation}

Given no access to the other measurement outcomes, it is not possible to know if the $X$ error has occurred or not, because $M_3$ is non-deterministic in the absence of noise.
The possible measurement results of this circuit including the error are either of $[0, 0, 0, 1, 1]$ or $[0, 0, 1, 0, 0]$.
Therefore, the $X$ error violates $d_3$, in that we would now deterministically have $m_3 \oplus m_4 = 1$, rather than $m_3 \oplus m_4 = 0$ as we would expect in the absence of noise.
This is independent of whether the outcome $m_3$ is 0 or 1.

The attentive reader may already have noticed that there is some freedom in specifying a set of detectors. 
For example, above one could have chosen equivalent third and fourth detectors which depend additionally on the first and second measurement outcomes
\begin{align}
d_{3}': m_1\oplus m_3 \oplus m_4 = 0,\nonumber \\
\quad d_{4}':m_2 \oplus m_4 \oplus m_5=0. 
\end{align}
This choice gives rise to the following alternative detector matrix 
\begin{equation}
D_2 :=
\begin{bmatrix}
1 & 0 & 0 & 0 & 0 \\
0 & 1 & 0 & 0 & 0 \\
1 & 0 & 1 & 1 & 0 \\
0 & 1 & 0 & 1 & 1 \\
\end{bmatrix}
\label{detector_matrix_2}.
\end{equation}
We explain the consequences and rules for choosing different detector matrices in Subsection \ref{subsec:detector_error_models}.
\triangleqed
\end{example} 

One can construct a detector matrix by performing a stabilizer tableau simulation. 
This can be performed in $O(ng + nd + n^2r)$ time, where $n$ is the number of qubits, $g$ is the number of gates, $d$ is the number of deterministic measurements and $r$ is the number of non-deterministic measurements \cite{gidney2021stim}. 
We simplify this here to worst-case complexity $O(n^3l)$, where $l$ is the number of circuit layers.

\subsection{Error vectors}

In the previous subsection, we introduced detector matrices and gave an example of their use in the case of a single circuit error occurring. 
In this subsection, we formalize errors in the detector error model framework and introduce the vector representation of a noise model.

\begin{definition}[Errors]
An error $E_i$ refers either to an undesired Pauli operator acting nontrivially on one or more qubits at a specific circuit location, or to a flip in the result of a specific measurement.
\end{definition}

A set of stochastic errors forms a noise model:

\begin{definition}[Noise model]
We represent a \textit{noise model} as a column vector $\mathbf{p}_e \in [0,1]^{e}$
where entry $p_i$ denotes the probability of error $E_i$ occurring and $e$ is the total number of errors that may occur.
\end{definition}

We assume independence between all errors, i.e.\ for any $i \neq j$, 
\[
Pr(E_i \wedge E_j) = p_ip_j.
\]

\begin{example}[A circuit's noise model] $\triangleright$
Consider again the circuit introduced in the previous example, now with 8 possible $X$-error locations
\begin{equation}
\includegraphics[valign=c]{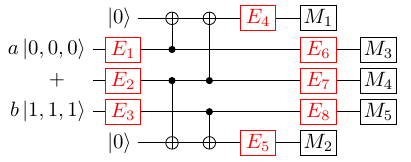}  .  
\end{equation}
This noise model is represented by a column vector with 8 entries whose values are $p_1$ to $p_8$. 
If we switch to a more complicated noise model in which an $X,Y$ or $Z$ error could have occurred at each of the 8 circuit locations, the column vector would require 24 entries instead
\label{example:circuit_error_model}
\triangleqed
\end{example} 

In the preceding examples, we examined a noise model where $X$ Pauli errors follow certain gates. 
In general, errors can occur after any gate. 
Thus, demonstrating a circuit's fault tolerance requires showing its ability to correct any error, regardless of where it occurs. 
This type of noise model is known as a \textit{circuit-level noise} model.

\begin{definition}[Circuit-level noise]
In a \textit{circuit-level noise} model any $n$-qubit Pauli error can occur after a $n$-qubit gate and any single-qubit Pauli error after state preparation.
A classical bit-flip error can occur on measurement outcomes \footnote{If the measurement is non-destructive, a Pauli error can occur on the measured qubit.}. 
\label{def:circuit_level_noise}
\end{definition}

A circuit-level error is any error that can occur in a circuit-level noise model.
A noisy Clifford circuit can be simulated by drawing samples from an error model.
We represent such a sample as a \textit{circuit error vector}. 

\begin{definition}[Circuit error vector]
A circuit error vector is a length $e$ column vector $\mathbf{e}$, whose $i$-th entry is $1$ if error $E_i$ occurred, and $0$ otherwise.
\end{definition}

\subsection{Measurement syndrome matrix}

To calculate which detectors are violated by a circuit error, one first needs to calculate which measurements are affected. 
This can be done by multiplying the \textit{measurement syndrome matrix} with the circuit error vector. 

\begin{definition}[Measurement syndrome matrix]
A \textit{measurement syndrome matrix} is a binary matrix $\Omega\in \zz_2^{m\times e}$ where $m$ is the number of measurements and $e$ is the number of errors that may occur in a circuit.
The entry $\Omega_{i,j}$ is $1$ if error $j$ flips $m_i$ and $0$ otherwise.
\end{definition}

An error $E_j$ flips a measurement outcome, if the resulting operator found by propagating the error through the circuit anti-commutes with the Pauli operator being measured.
A measurement syndrome matrix can be calculated by propagating errors through the circuit using Pauli frames \cite{rall2019simulation}.
For a circuit with $l$ gate layers acting on $n$ qubits, updating a Pauli frame throughout the circuit takes $O(nl)$ time  \cite{gidney2021stim}.
If there is a noise location after each gate, $O(nl)$ Pauli frames need to be simulated.
Therefore,  the time complexity of constructing a measurement syndrome matrix is $O(n^2l^2)$ \cite{gidney2021stim}.

\begin{example}[Constructing a measurement syndrome matrix] 
$\triangleright$
Let us again consider the circuit in Example \ref{example:circuit_error_model} in which an $X$ error can occur at any of the 8 highlighted error locations. The measurement syndrome matrix 
is
\begin{equation}\label{eq:measurement_syndrome_matrix}
\Omega_1 \coloneqq
\begin{bmatrix}
1 & 1 & 0 & 1 & 0 & 0 & 0 & 0\\
0 & 1 & 1 & 0 & 1 & 0 & 0 & 0\\
1 & 0 & 0 & 0 & 0 & 1 & 0 & 0\\
0 & 1 & 0 & 0 & 0 & 0 & 1 & 0\\
0 & 0 & 1 & 0 & 0 & 0 & 0 & 1\\
\end{bmatrix}.
\end{equation}

To gain some intuition, consider entry $\Omega_{1,1}$: This entry is 1 because $E_1$ propagates to a weight-two $X$ operator acting on the top two qubits.
This flips $m_1$.
\triangleqed
\end{example}

We can now state the violation condition of a detector in terms of the syndrome matrix.

\begin{definition}[Violating detectors]
A circuit error $\mathbf{e}$ \textit{violates} a detector $\mathbf{d}$ if $\mathbf{d}^T \Omega \mathbf{e} = 1$.
\end{definition}

\subsection{Detector error models}\label{subsec:detector_error_models}
We now combine the concepts of detectors and errors introduced in the previous subsections. 
A \textit{detector error model} consists of two parts: a \textit{detector error matrix} containing full information about which errors violate which detectors, and a \textit{noise model} assigning a likelihood to the occurrence of each error. 

\begin{definition}[Detector error matrix]
A \textit{detector error matrix} is a binary matrix $H\in \zz_2^{d \times e}$, where $d$ is the number of detectors and $e$ is the number of errors that can occur in a noise model. The entry $H_{i,j}$ equals $1$ if detector $i$ is violated by error $j$  and $0$ otherwise.
\end{definition}

Detector error matrices can be obtained as products of detector matrices and measurement syndrome matrices via $H = D \Omega$.

\begin{example}[Constructing a detector error matrix] $\triangleright$
We can construct the detector error matrix of the circuit in Example \ref{example:circuit_error_model} by multiplying the detector matrix defined in Eq.~\eqref{eq:detector_matrix_1}
and the syndrome measurement matrix in Eq.~\eqref{eq:measurement_syndrome_matrix} as
\begin{align}
H_1 &\coloneqq D_1 \Omega_1 \nonumber \\
&=
\begin{bmatrix}
1 & 1 & 0 & 1 & 0 & 0 & 0 & 0\\
0 & 1 & 1 & 0 & 1 & 0 & 0 & 0\\
1 & 1 & 0 & 0 & 0 & 1 & 1 & 0\\
0 & 1 & 1 & 0 & 0 & 0 & 1 & 1\\
\end{bmatrix}.
\label{eq:detector_error_matrix_example}
\end{align}
\triangleqed
\end{example}
A detector error matrix can be represented as
a \emph{Tanner graph} \cite{tanner1056404, loeliger2004introduction}.  
A Tanner graph is a useful graph representation with two types of nodes:
\textit{bit nodes} that represent errors and \textit{check nodes} that represent detectors. 
A bit node $i$ is connected via an edge to a check node $j$ if entry $(i,j)$ in the detector matrix is $1$.
The detector error matrix in Eq.~\eqref{eq:detector_error_matrix_example} can be represented by the  Tanner graph
\begin{equation}
\includegraphics[width=0.85\columnwidth, valign=c]{rep\_code\_graph.pdf}
\label{eq:dense_tanner_graph}.
\end{equation}

We refer to the set of detectors that are violated by a circuit error $\mathbf{e}$ as a \textit{syndrome}, which is given by $\mathbf{s}
=H\mathbf{e}$.
The task of a \textit{decoder} is to infer which circuit error vector $\mathbf{e}$ occurred, given a syndrome $\mathbf{s}$.
From the Tanner graph above, one can see that every individual error violates a unique set of detectors.
Therefore, any single error can be correctly inferred from its syndrome and can be corrected.

In Subsection \ref{subsection:detectors}, we have shown an alternative choice of detectors leading to detector matrix $D_2$ in Eq.~\eqref{detector_matrix_2}.
Using this alternative detector matrix we find the different detector error matrix
\begin{align}
H_2 &= D_2 \Omega_1 \nonumber \\
&= \begin{bmatrix}
1 & 1 & 0 & 1 & 0 & 0 & 0 & 0\\
0 & 1 & 1 & 0 & 1 & 0 & 0 & 0\\
0 & 0 & 0 & 1 & 0 & 1 & 1 & 0\\
0 & 0 & 0 & 0 & 1 & 0 & 1 & 1\\
\end{bmatrix}.
\end{align}
\label{eq:detector_error_matrix_example_2}
This detector error matrix can be represented by the following sparser Tanner graph.

\begin{equation}
\includegraphics[width=0.85\columnwidth, valign=c]{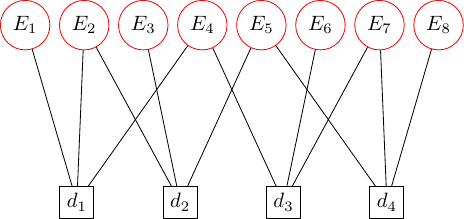}
\label{eq:sparse_tanner_graph}.
\end{equation}

Again this Tanner graph makes apparent the fact that every individual error violates a unique set of detectors.
Is this by coincidence or does any choice of detectors lead to a detector error matrix that can uniquely identify each error?
The latter is the case.
In fact, the error-correcting capabilities of a circuit are independent of which detector matrix is used:

\begin{lemma}[Freedom in choosing detector matrices]
\label{thm:freedom_in_choosing_detectors}
Two detector matrices of the same circuit can distinguish between the same pairs of error sets.
\end{lemma}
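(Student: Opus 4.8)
The plan is to show that any two detector matrices $D_1,D_2$ of the same circuit are related by $D_2 = A D_1$ for some invertible $A\in\zz_2^{d\times d}$, and then to observe that this forces the associated detector error matrices $H_1 = D_1\Omega$ and $H_2 = D_2\Omega$ to have the same kernel, from which the statement is immediate.

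First I would fix what a detector is as a vector. Writing $\mathcal{M}\subseteq\zz_2^m$ for the set of all noiseless measurement outcome vectors — a linear subspace by the remark following Definition~\ref{detectormatrix} — and using the normalisation $b_i=0$ already adopted in the text, a detector is precisely a vector $\mathbf{d}$ with $\mathbf{d}^T\mathbf{m}=0$ for every $\mathbf{m}\in\mathcal{M}$, i.e.\ an element of $\mathcal{M}^{\perp}$. Now take either $D\in\{D_1,D_2\}$. By definition its $d$ rows are linearly independent detectors, so $\mathrm{rowsp}(D)\subseteq\mathcal{M}^{\perp}$ and $\mathrm{rank}(D)=d$; and by the same remark, $\ker D = \mathcal{M}$. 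Rank–nullity then gives $\dim\mathcal{M}=m-d$, hence $\dim\mathcal{M}^{\perp}=d$, so the inclusion $\mathrm{rowsp}(D)\subseteq\mathcal{M}^{\perp}$ is in fact an equality. Therefore $\mathrm{rowsp}(D_1)=\mathcal{M}^{\perp}=\mathrm{rowsp}(D_2)$, and since both matrices have full row rank $d$ there is an invertible $A\in\zz_2^{d\times d}$ with $D_2 = A D_1$.

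Next I would push this through the factorisation $H=D\Omega$: $H_2 = D_2\Omega = A D_1\Omega = A H_1$. Because $A$ is invertible, $H_1\mathbf{v}=\mathbf{0}\iff A H_1\mathbf{v}=\mathbf{0}\iff H_2\mathbf{v}=\mathbf{0}$, so $\ker H_1 = \ker H_2$. Finally, a detector error matrix $H$ distinguishes two error sets $\mathbf{e},\mathbf{e}'$ exactly when their syndromes differ, i.e.\ when $H(\mathbf{e}+\mathbf{e}')\neq\mathbf{0}$, i.e.\ when $\mathbf{e}+\mathbf{e}'\notin\ker H$; since $\ker H_1 = \ker H_2$, the two detector matrices distinguish precisely the same pairs of error sets.

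The only step that requires care is the identification $\mathrm{rowsp}(D)=\mathcal{M}^{\perp}$ for an arbitrary detector matrix — that a maximal linearly independent set of detectors really spans the whole space of parity constraints satisfied by the noiseless outcomes. This rests on the dimension count above, which in turn uses the two facts already stated in the excerpt, namely that $\mathcal{M}$ is a linear subspace and that $\ker D=\mathcal{M}$ for every detector matrix $D$; once that is in hand, everything else is elementary linear algebra over $\zz_2$. I would also remark in passing that the argument is manifestly constructive: the change of basis $A$ can be obtained by Gaussian elimination expressing the rows of $D_2$ in terms of those of $D_1$.
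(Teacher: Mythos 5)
Your proof is correct and rests on the same key fact as the paper's, namely that every detector matrix of a given circuit has kernel equal to the space $\mathcal{M}$ of noiseless measurement outcomes, so that distinguishability of $\mathbf{e}_1,\mathbf{e}_2$ reduces to whether $\Omega(\mathbf{e}_1+\mathbf{e}_2)\in\ker D$. Your extra detour through $\mathrm{rowsp}(D)=\mathcal{M}^{\perp}$ and the invertible change of basis $D_2=AD_1$ is a harmless (and more explicit) elaboration of the paper's terser argument, but not a genuinely different route.
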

\begin{proof}
The kernel of a detector error matrix, $\text{ker}(H) = \text{ker}(D\Omega)$, is formed by all possible measurement outcomes of a noiseless circuit.
Therefore two detector error matrices of the same circuit have the same kernel.
A detector matrix $D_1$ can distinguish between circuit error vectors $\mathbf{e}_1, \mathbf{e}_2$ if and only if $\mathbf{e}_1 - \mathbf{e}_2 \notin \text{ker}(D_1 \Omega).$ 
As $\mathbf{e} \in \text{ker}(D_1 \Omega)$ is equivalent to $\Omega \mathbf{e} \in \text{ker}(D_1)$ any other detector matrix $D_2$ with identical kernel can distinguish the same pairs of circuit error vectors.

\end{proof}

We will comment on consequences  of different detector matrices at the end of this section.

\subsection{Observables}

To obtain the outcomes of encoded computations, logical operators need to be measured.
In some approaches to encoded computation, logical operators are also measured non-destructively at intermediate stages of the circuit.
Often logical operators are not measured directly, instead multiple single-qubit measurements are performed from which the value of a logical operator can be inferred.
We refer to a set of measurements from which one can infer the value of a logical operator as an \textit{observable}. More formally:
\begin{definition}[Observable] 
An observable is a binary sum of measurements, which equals the outcome of measuring a logical operator, for any logical state.
\end{definition}

Similar to detectors, an observable $o_i$ can be written as a column vector $\mathbf{o}_i$.
If an observable has a deterministic value, we again denote it with $b_i$.
An error $\mathbf{e}$ violates an observable if $\mathbf{o}_i^T \Omega \mathbf{e} \neq b_i$.
The fidelity of logical gates can be found by running \textit{benchmarking circuits}.
For a benchmarking circuit, the value of an observable is known in a noise-free implementation.
To find the logical gate fidelity, one performs the noisy benchmarking circuit many times.
If the circuit is run on quantum hardware, the only outcome of each run is a vector of measurement outcomes $\mathbf{m}$.
From these measurement outcomes, the syndrome $\mathbf{s} = D \mathbf{m}$ is calculated and passed to the decoder.
The decoder infers which error $\mathbf{e}_\text{inferred}$ has occurred given the syndrome and the detector error matrix, $\mathbf{s} = H \mathbf{e}_\text{inferred}$.
From this it can be predicted if an observable is violated: $\mathbf{o}_i^T \Omega \mathbf{e}_\text{inferred}$.
A \textit{logical error} has occurred if \begin{equation}\mathbf{o}^T_i \mathbf{m} \oplus \mathbf{o}_i^T \Omega \mathbf{e}_\text{inferred} \neq b_i.
\end{equation}
Note that when running a simulation on a classical computer, one can know specifically that error $\mathbf{e}_\text{sample}$ occurred. In this case one could equivalently write that a logical error occurs if
\begin{equation}\mathbf{o}^T_i \Omega \mathbf{e}_\text{sample} \oplus \mathbf{o}_i^T \Omega \mathbf{e}_\text{inferred} \neq b_i.
\end{equation}
The \textit{logical error rate} is defined as the number of logical errors divided by the number of runs.

\begin{example}[Repetition code memory experiment] $\triangleright$
Consider a noise model that inserts single-qubit $X$ errors into a circuit with probability $p$.
The simplest code that can protect quantum information against this type of noise is the $d=3$ repetition code with logical basis state vectors $\ket{0}_{L} = \ket{0,0,0}$ and $\ket{1}_{L} = \ket{1,1,1}$.
A logical $Z$ basis measurement can be performed by simply measuring any of the three qubits in the $Z$ basis.
In the following circuit, the qubits start in either the state vector $\ket{0}_{L}$ or $\ket{1}_{L}$.
\begin{equation}
\includegraphics[width=0.85\columnwidth, valign=c]{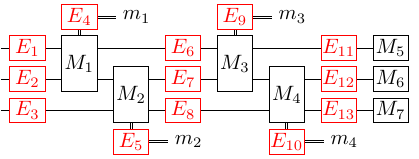}.
\end{equation}

In the circuit, 
the first four gates labeled $M_1, M_2, M_3$ and $M_4$ represent $ZZ$ measurements. 
Errors $E_4, E_5, E_9$ and $E_{10}$ are classical bit flip errors that may flip measurement outcomes $m_1, m_2, m_3$ and $m_4$.
A choice of detectors for this circuit is
\begin{align}
d_1&\colon m_1 = 0, \nonumber \\
d_2&\colon m_2 = 0, \nonumber \\
d_3&\colon m_1 \oplus m_3 = 0 ,\nonumber \\
d_4&\colon m_2 \oplus m_4 = 0 ,\label{eq:detectors_rep_code}
\\ 
d_5&\colon m_3 \oplus m_5 \oplus m_6 = 0, \nonumber \\
d_6&\colon m_4 \oplus m_6 \oplus m_7 = 0 . \nonumber
\end{align}
This circuit has one observable.
Two possible sets of measurements, among many choices, that can be used as observable here are $\{m_5\}$ or $\{m_5, m_6, m_7\}$.
The reason that these sets of measurements can be used is that their parity depends on whether the qubits start in $\ket{0}_{L}$ or $\ket{1}_{L}$.
Note that the parities of the detectors in Eq.~\eqref{eq:detectors_rep_code} do not depend on whether the qubits start in $\ket{0}_{L}$ or $\ket{1}_{L}$.
If we define the qubits to start 
in $\ket{0}_{L}$, the observable's parity is fixed; $o_1 \colon m_5=0$ or $o'_1 \colon m_5 \oplus m_6 \oplus m_7 = 0$ .
Together, the 
detectors and observables form a maximal set of linearly independent constraints on measurement outcomes.
Note that any set of detectors can be added to an observable to get another valid representation of the observable.
For example, here $o_1 + d_6 + d_4 + d_2 = o'_1$.

Below, in the Tanner graph of the circuit, we have added a blue check node to represent the observable $o_1 \colon m_5$.
One can verify that at least three errors are needed to flip the observable without violating any detectors. 
Examples of such errors are $\{E_1, E_2, E_3\}$ and $\{E_7, E_8, E_9, E_{11}\},$
\begin{equation}
\includegraphics[valign=c, width=0.8\columnwidth]{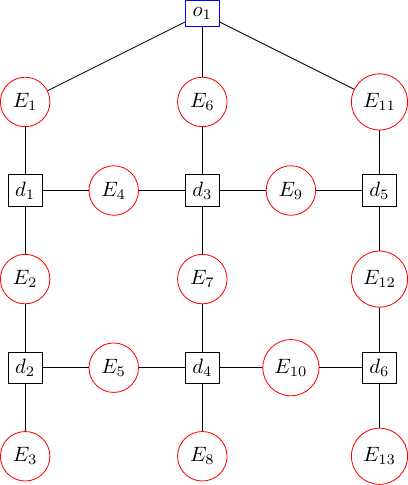}.
\end{equation}
\triangleqed
\end{example}

The circuit in the example above has \textit{circuit distance} 3.
\begin{definition}[Circuit distance]
The circuit distance is the minimum number of errors needed to flip at least one observable without violating any detectors.
\end{definition}

For a circuit that has circuit distance $d_\text{circ}$, any set consisting of $\lfloor(d_\text{circ} - 1) / 2\rfloor$ errors is \textit{correctable}.
Here, correctable means that there exists a decoder that can correctly predict whether any observable has been violated.
The circuit distance depends on the noise model the circuit is subject to.
Importantly, following directly from Lemma \ref{thm:freedom_in_choosing_detectors}, it does not depend on which detectors are chosen.

\begin{corollary}[Preserved circuit distance]
The circuit distance is unaffected by the choice of detector matrix.
\end{corollary}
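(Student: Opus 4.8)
The plan is to deduce this as an immediate consequence of Lemma~\ref{thm:freedom_in_choosing_detectors}, by observing that every ingredient in the definition of circuit distance is insensitive to the choice of detector matrix. Fix a circuit together with its noise model, measurement syndrome matrix $\Omega$, and observables $\mathbf{o}_1,\dots,\mathbf{o}_k$; take $b_i = 0$ for all $i$ without loss of generality. A set of $w$ errors, encoded by a circuit error vector $\mathbf{e}$ of Hamming weight $w$, witnesses the defining property of circuit distance exactly when it flips some observable, i.e.\ $\mathbf{o}_i^T\Omega\mathbf{e} = 1$ for some $i$, while violating no detector, i.e.\ $D\Omega\mathbf{e} = \mathbf{0}$, where $D$ is the detector matrix being used. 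The circuit distance is then the minimum such $w$, so it suffices to show that the feasible set $\{\mathbf{e} : \exists\, i,\ \mathbf{o}_i^T\Omega\mathbf{e} = 1 \text{ and } D\Omega\mathbf{e} = \mathbf{0}\}$ is the same for any two detector matrices $D$ and $D'$ of the circuit.

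The condition $\mathbf{o}_i^T\Omega\mathbf{e} = 1$ does not mention $D$ at all. For the condition $D\Omega\mathbf{e} = \mathbf{0}$, I would rewrite it as $\Omega\mathbf{e} \in \ker(D)$ and invoke the fact used in the proof of Lemma~\ref{thm:freedom_in_choosing_detectors}: the kernel of a detector matrix of a given circuit is exactly the space of noiseless measurement outcomes, hence is the same for $D$ and $D'$. Consequently $D\Omega\mathbf{e} = \mathbf{0}$ and $D'\Omega\mathbf{e} = \mathbf{0}$ are equivalent constraints on $\mathbf{e}$, the feasible set above is independent of the detector matrix, and therefore so is its minimum Hamming weight, which is the circuit distance.

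I do not anticipate a genuine obstacle; the substantive work is already contained in Lemma~\ref{thm:freedom_in_choosing_detectors}. The only care needed is in the translation: rephrasing the verbal definition of circuit distance as a minimum-weight feasibility problem over circuit error vectors, and recognizing that this is the special case of the lemma in which one distinguishes a nontrivial error vector from the zero vector, with the additional requirement that the error flip an observable. If desired, one can also note in passing that for errors violating no detector the value $\mathbf{o}_i^T\Omega\mathbf{e}$ is independent of which representative of the observable is used, since representatives differ by sums of detectors; this is reassuring but not needed for the corollary as stated.
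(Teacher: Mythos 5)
Your proposal is correct and matches the paper's intent: the paper gives no separate proof, simply noting that the corollary "follows directly" from Lemma~\ref{thm:freedom_in_choosing_detectors}, and your unpacking (the observable condition never involves $D$, while $D\Omega\mathbf{e}=\mathbf{0}$ depends on $D$ only through $\ker(D)$, which is the same for all detector matrices of the circuit) is exactly the kernel argument already used in the lemma's proof. No gaps; the remark about observable representatives differing by sums of detectors is a nice but inessential sanity check.
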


The choice of detector matrix primarily influences the performance of a decoder. 
Decoding is typically easier when using a Tanner graph with fewer edges. 
This is illustrated by the two Tanner graphs shown previously in Eq.~\eqref{eq:dense_tanner_graph} and Eq.~\eqref{eq:sparse_tanner_graph}, which demonstrate that even small circuits can have a significant difference in the number of edges. 
In general, reducing the number of edges in a Tanner graph can be done by applying elementary row operations that reduce the number of 1 entries on the associated detector error matrix.
It is not possible to find the detector error matrix with the least number of 1 entries efficiently.
This is an NP-hard problem as it is equivalent to finding the minimum weight basis of a vector space which is NP-complete \cite{gidney2022qec, deo1982algorithms}.
This is rather unfortunate, as a consequence of this is that, when using \textit{stim} \cite{gidney2021stim}, QEC researchers have to manually find detectors for each circuit they want to simulate \cite{gidney2022qec}.

\section{Designing syndrome extraction circuits for high measurement noise}
\label{sec:syndrome_extraction_circuits}

When implementing a QEC code on hardware, the goal is the best possible protection against the specific experimentally relevant noise model while simultaneously minimizing resource requirements.
Previous work has shown that QEC codes can be tailored to take advantage of the structure inherent in a Pauli noise model biased towards $X, Y$ or $Z$ noise \cite{dua2022clifford, tiurev2022correcting, roffe2022bias,newtiurev}.
The \emph{XZZX surface code} is an example of a code tailored to biased noise \cite{Bonilla-Ataides2020, darmawan2021practical}.
It is based on the \emph{CSS surface code}, a $[[d^2, 1, d]]$ topological QEC code \cite{kitaev2003fault} which is a promising candidate for a fault-tolerant quantum computing architecture due to its high threshold and the locality of its stabilizers \cite{dennis2002topological, fowler2012surface}.

To the best of our knowledge,
a type of noise bias that circuits implementing the surface code have not previously been tailored to is measurement noise bias.
Such noise is relevant in current quantum hardware -- for example, in superconducting qubit devices.
In fact, in a circuit-level noise model of Google's superconducting device, measurement errors are the most likely type of error to occur and five times more likely to occur than an error after a two-qubit gate  \cite{gidney2022benchmarking}.
Here, we focus on tailoring a circuit that implements the surface code towards high measurement bias.
The surface code itself cannot be tailored to this type of noise bias using conventional techniques such as applying Clifford conjugations to stabilizers.
This is because Clifford conjugations do not have an effect on which detectors are violated by measurement errors.

\begin{example}[Syndrome extraction circuits with two auxiliary
qubits]
$\triangleright$
To tailor the surface code to measurement noise we adapt the syndrome extraction circuit.
Our syndrome extraction circuit uses two auxiliary qubits that are measured with respect to the same basis. 
The syndrome extraction circuit measuring a weight-4 $X$ stabilizer is 
\begin{equation}
\includegraphics[valign=c, width=0.85\columnwidth]{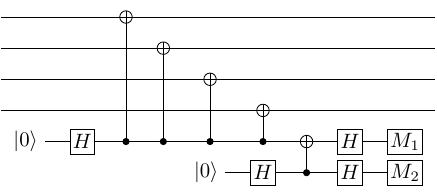}.
\end{equation}

Similarly, the circuit measuring a weight-4 $Z$ stabilizer is
\begin{equation}
\includegraphics[valign=c, width=0.85\columnwidth]{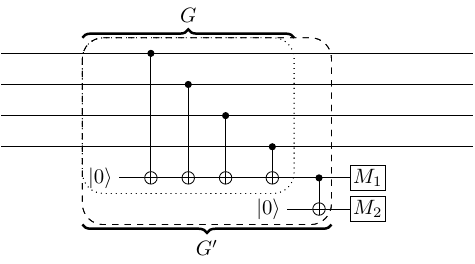} \label{eq:Z_stab}.
\end{equation}

The borders separating two sections of the circuit serve for later reference.
The circuits with two auxiliary qubits involve more gates, initializations, and measurements, but they use the same gate set as circuits that are conventionally used in experimental surface code implementations.
The reason why the circuits with two auxiliary qubits are good at detecting measurement errors can be understood by observing their detector error models. 
Specifically, we look at the detector error model of these circuits subject to pure measurement noise. Consider the following circuit which repeatedly measures $Z$-stabilizers

\begin{equation}
\includegraphics[valign=c, width=0.85\columnwidth]{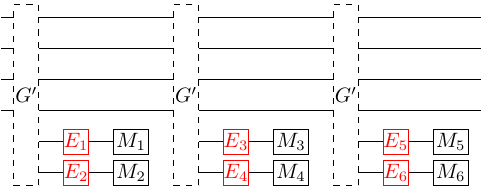}.
\end{equation}
The dashed-border gate represents the gates in the dashed box in Eq.~\eqref{eq:Z_stab} while red gates represent possible $X$-errors, i.e., errors that flip the measurement results. We define the detectors in this circuit as 
\begin{align}
\begin{alignedat}{2}
d_1\colon m_1 \oplus m_2 &= 0, \quad d_2\colon m_2 \oplus m_3 &&= 0,\\ 
d_3\colon m_3 \oplus m_4 &= 0, \quad d_4\colon m_4 \oplus m_5 &&= 0,\\ 
d_5\colon m_5 \oplus m_6 &= 0.
\end{alignedat}
\end{align}
This choice of detectors allows the use of a \emph{matching decoder} \cite{brown2021conservation}. 
The Tanner graph of the detector error model for this circuit is
\begin{center}
\begin{equation}
\includegraphics[valign=c]{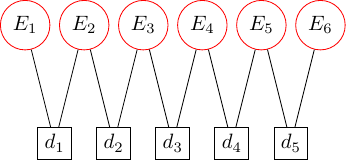}.
\end{equation}
\end{center}

Compare this to the standard circuit for repeatedly measuring $Z$-stabilizers given by
\begin{equation}
\includegraphics[valign=c, width=0.85\columnwidth]{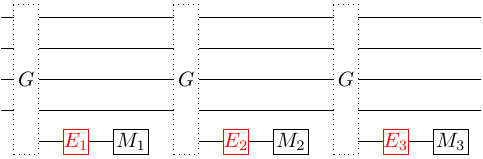}.
\end{equation}

The gate with a dotted border represents the gates in the dotted box in Eq.~\eqref{eq:Z_stab}. This circuit's detectors are 
\begin{equation}
d_1: m_1 \oplus m_2 = 0, \quad d_2: m_2 \oplus m_3 = 0 ,
\end{equation}
and its Tanner graph is given by
\begin{center}
\begin{equation}
\includegraphics[valign=c]{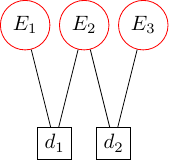}.
\end{equation}
\end{center}
As the respective Tanner graphs are those of classical repetition codes, it is apparent that six measurement errors are needed to avoid violating any detectors in the proposed circuit while in the standard circuit only three measurement errors are required.
\triangleqed
\end{example}

In the example, 
we have shown that the proposed syndrome extraction circuit is resilient to twice as many measurement errors as the standard circuit.
In Appendix \ref{sec:simulation_details}, 
we investigate the performance of the proposed syndrome extraction circuits in the presence of circuit-level noise. 
The source code is available at Ref.\ \cite{github_repo}.

We numerically find that for circuit-level noise modelling superconducting qubits the proposed circuits are approximately equally efficient in terms of qubit count (including auxiliary qubits) in the teraquop regime as the standard circuits.
The proposed circuits are more efficient for circuit-level noise with higher measurement bias.

\section{Designing color code measurement schedules}
\label{sec:measurement_schedule}

In the previous section, we focused on syndrome extraction circuits that measure Pauli operators. 
Now, our aim shifts towards a higher level of abstraction, concealing the details of the syndrome extraction circuits.
Instead, we focus on which stabilizers to measure at which time to perform QEC.
Here, we solely focus on performing QEC using CSS codes.
Before proceeding, we clarify what it actually means to perform QEC.

\subsection{Fault-tolerant QEC procedure}
The primary objective of an error correction procedure is the protection of logical information encoded into data qubits over an extended duration. 
Even when physical error probabilities are low, errors will accumulate over time and corrupt the encoded information. 
To counteract this accumulation, errors need to be intermittently removed by use of a \textit{QEC procedure}.
Such a procedure is described by a circuit that achieves the sought-after removal of errors that would have accumulated on the quantum state.
The main challenge in the design of QEC procedures is that the gates used to implement the procedure are themselves noisy, and thus  \textit{internal errors} can occur.
\begin{definition}[Internal, input and output errors]
\label{definition:internal_input_output_errors}
An error occuring before the start of the circuit implementing the procedure is classified as an input error.
Errors occuring during the circuit are termed internal errors.
Any error present at the end of the circuit is referred to as an output error.
\end{definition}

The use of a QEC procedure will only lead to a lower logical error probability if, in addition to correcting input errors, the spread of internal errors is sufficiently limited. 
A QEC procedure that achieves this is said to be fault-tolerant.

\begin{definition}[Fault-tolerant QEC procedure \cite{gottesman2010introduction}]
\label{definition:FTEC}
A QEC procedure is fault-tolerant to distance $d = 2t+1$ if, provided the sum of the number of input and internal errors is at most $t$, both the output error’s $X$ weight and the output error's $Z$ weight are at most the number of internal errors.
\label{definition:FTEC_protocol}
\end{definition}
Note that the output error's $X$ and $Z$ weight are defined to be the minimized $X$ and $Z$ weight over all stabilizer equivalent representatives.
Fig.~\ref{fig:qec_procedure} shows an illustration of a QEC procedure with 3 input errors and 4 internal errors. 
The $X$ weight of the output error is 2 and the $Z$ weight is 1.

\begin{figure}
\includegraphics[width=\columnwidth]{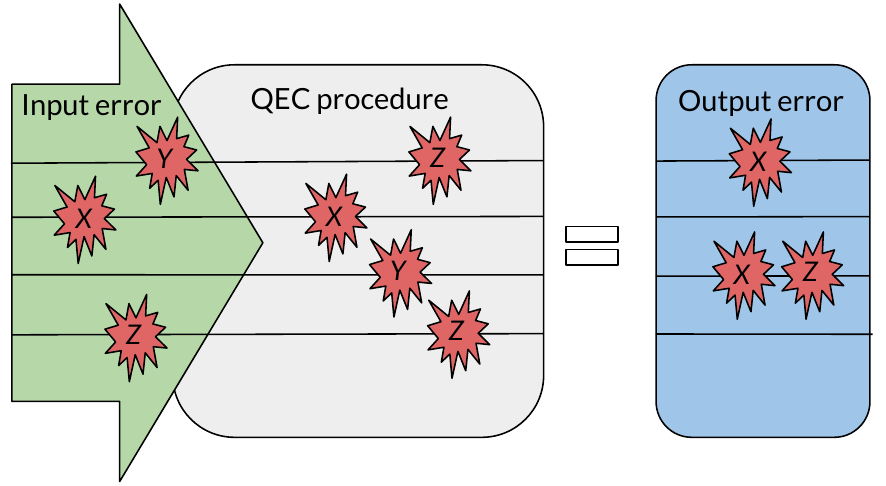}
\caption{Input errors occur before the QEC procedure and internal errors occur during the procedure. Together they lead to an output error.}
\label{fig:qec_procedure}
\end{figure}

There are many \textit{styles} of QEC procedures, with the most common ones being Steane-style \cite{steane1997active}, Knill-style \cite{knill2005scalable}, and Shor-style \cite{shor1996fault} QEC procedures.
In the next section, we focus on a variant of Shor-style, termed \textit{short Shor-style QEC procedures} \cite{delfosse2020short}.

\subsection{Fault-tolerant measurement schedules}

A short Shor-style QEC procedure can be implemented using a \emph{measurement schedule} combined with a decoder. 

\begin{definition}[Measurement schedule]
A measurement schedule for an $[n, k, d]$ code, denoted $\mathcal{C}$, consists of $n_M$ rounds of measurements.
In round $i$ of the schedule, a set of Pauli operators acting on disjoint sets of data qubits of $\mathcal{C}$ is measured, resulting in a vector $\mathbf{r}_i$ of measurement outcomes, where $i \in [1, 2, \dots , n_M]$.
\end{definition}

\begin{example}[Measurement schedule for the three-qubit repetition code]
$\triangleright$
In the remainder of this section, we will use the three-qubit repetition code as an example.
Below is a three-round measurement schedule, with a single measurement performed in each round
\begin{equation}
\includegraphics[valign=c, width=0.85\columnwidth]{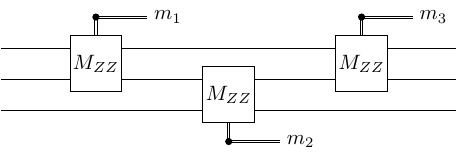}
\label{eq:rep_code_measurement_schedule_noiseless}.
\end{equation} 
\triangleqed
\end{example}

For CSS codes, the measurement schedule can be split into two parts: 
One for correcting $X$ errors and the other for $Z$ errors.
Using the definition of a fault-tolerant QEC procedure, we are now going to define a fault-tolerant measurement schedule for correcting $X$ errors.
To do this we first need to introduce a noise model.

\begin{definition}[Phenomenological noise model]
The phenomenological noise model considered here consists of two types of errors: measurement errors and $X$ errors on data qubits.
The $X$ errors can occur on data qubits before each measurement round. 
We refer to the $X$ errors before the first measurement round as input errors, and any error (including measurement errors) after the first measurement round as internal errors.
\end{definition}

With this noise model it may seem that we are cutting corners and oversimplifying the task of designing fault-tolerant circuits.
This, however, is not true.
It turns out that using a measurement schedule that satisfies Definition \ref{definition:FTEC} for the phenomenological noise model, we can create a circuit that automatically also satisfies Definition \ref{definition:FTEC} for a circuit-level noise model.
The QEC procedure that creates the circuit-level noise resilient circuit is summarized in Fig.~\ref{fig:qec_procedure_components}.
The procedure consists of two main steps, namely creating a $Z$ measurement schedule from the $X$ measurement schedule and implementing the measurements using \textit{fault-tolerant syndrome extraction circuits}.

Here, we focus on designing the measurement schedule, as this is the only code dependent part of the procedure.
For a detailed explanation of the QEC procedure and a proof that the resulting circuit is fault-tolerant for circuit-level noise, see Appendix \ref{sec:justification_noise_model}. While it is useful to consider a phenomenological noise model when designing a measurement schedule, it is important to use a circuit-level noise model for simulation. Simulations of phenomenological noise are not a good indicator of performance under circuit-level noise.
 
\begin{figure*}
\centering
\includegraphics[width=\linewidth]{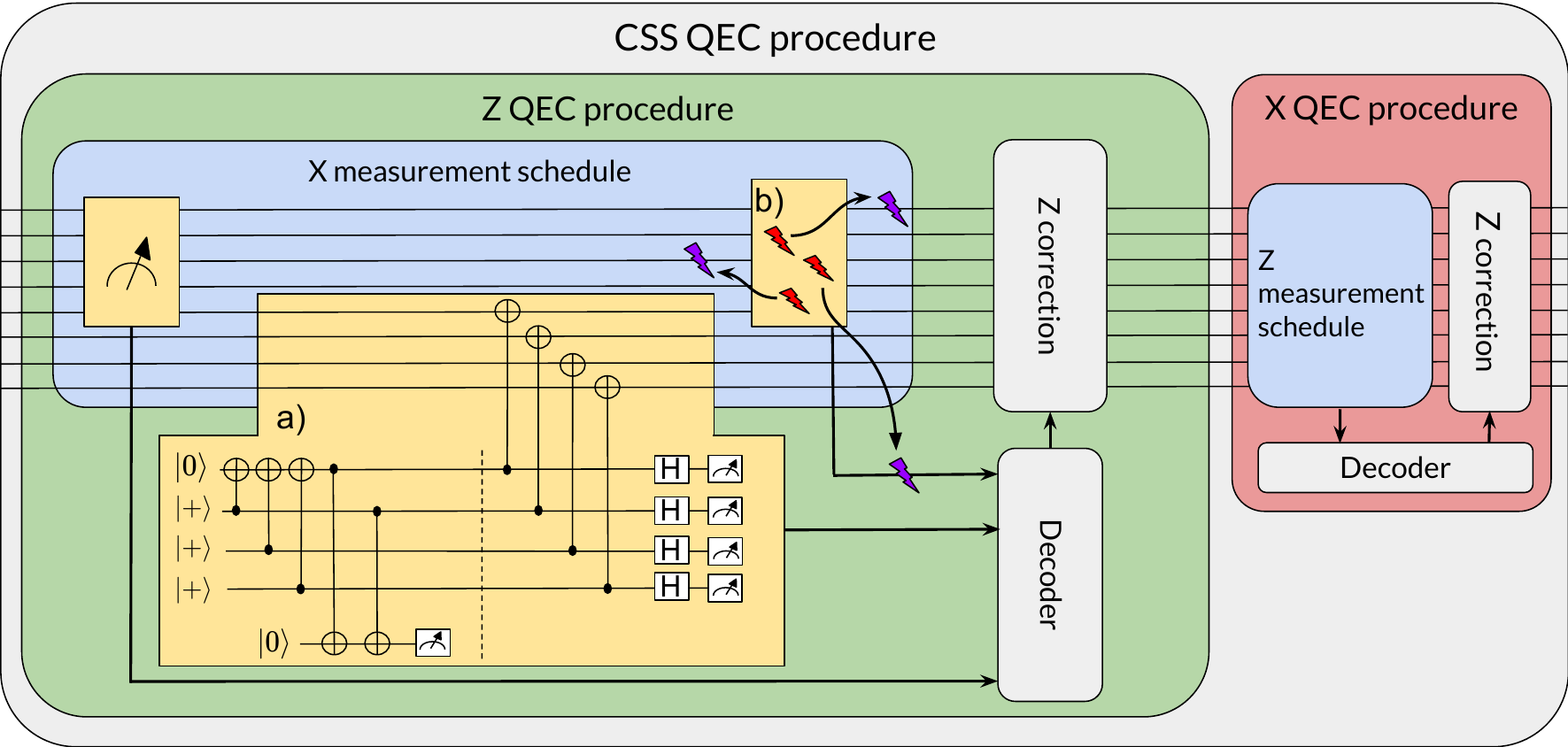}
\caption{The CSS QEC procedure is split up into two parts, one for correcting $X$ errors and one for correcting $Z$ errors. 
Inside each of these parts is a measurement schedule, whose measurement outcomes are passed to a decoder, which outputs a correction.
a) Measurements are performed using fault-tolerant syndrome extraction circuits. 
b) Errors happening during the syndrome extraction circuit are equivalent to errors that can happen in the phenomenological noise model.}
\label{fig:qec_procedure_components}
\end{figure*}

\begin{example}[Phenomenological noise model for the three-qubit repetition code]
$\triangleright$
The phenomenological noise model for the three-round measurement schedule is illustrated below
\begin{equation}
\includegraphics[valign=c, width=0.85\columnwidth]{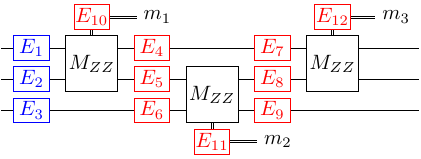}.
\label{eq:rep_code_measurement_schedule_initial}
\end{equation}
The blue boxes represent input error locations and the red boxes represent internal errors.
We simplify the noise model by merging equivalent error locations.
Merging means treating multiple error locations as a single error location.
The result is:
\begin{equation}
\includegraphics[valign=c, width=0.85\columnwidth]{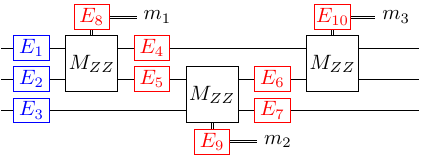}
\label{eq:condensed_rep_code_measurement}.
\end{equation}
\triangleqed
\end{example}

An error $\mathbf{e}$ in the phenomenological noise model consists of an input error $\mathbf{e}_\text{input}$ and an internal error $\mathbf{e}_\text{internal}$.
It leads to a propagated $\mathbf{e}_{\text{propagated}}$, a binary vector of length $n$.
An entry $j$ of $\mathbf{e}_{\text{propagated}}$ is $1$ if $\mathbf{e}_\text{input}$ and $\mathbf{e}_\text{internal}$ together contain an odd number of $X$ errors acting on qubit $j$.
We denote the weight of the propagated error as $\abs{\mathbf{e}_{\text{propagated}}}$.
For each measurement in a schedule, we introduce a detector whose value depends solely on its outcome.
As before, $\mathbf{s}$ denotes a syndrome. 
From the measurement outcomes obtained by performing a measurement schedule, the task of a decoder is to infer a correction.
From a syndrome $\mathbf{s}$, we denote the correction inferred by a decoder as $c(\mathbf{s})$.
An entry $j$ of $c(\mathbf{s})$ is $1$ if an $X$ operation is applied on qubit $j$ after $\mathbf{s}$ has been measured.
We refer to errors with the same syndrome as \textit{syndrome-consistent errors}.
We denote the set of all errors with syndrome $\mathbf{s}$ and weight at most $t$ as $\mathcal{E}(\mathbf{s}, t)$

Now we are ready to define fault-tolerant measurement schedules.

\begin{definition}[Fault-tolerant measurement schedule]
A measurement schedule is fault-tolerant to distance $d=2t+1$ if, $\forall \mathbf{s}:\exists c(\mathbf{s})$ such that $\forall \mathbf{e} \in \mathcal{E}(\mathbf{s},t)$
\begin{equation}
\abs{c(\mathbf{s}) \oplus \mathbf{e}_{\text{propagated}}} \leq \abs{\mathbf{e}_{\text{internal}}}.
\label{eq:FTEC}
\end{equation}

That is, for every syndrome $\mathbf{s}$, there exists a correction $c(\mathbf{s})$, such that the weight of the output error is at most the weight of the internal error for any syndrome consistent error with weight up to $t$. The output error is the result of applying the correction to the propagated error.
\end{definition}

\begin{example}[Showing a measurement schedule is fault-tolerant]
$\triangleright$
To show that a measurement schedule is fault-tolerant, we need to show that for every syndrome the fault-tolerance condition specified in Eq.~\eqref{eq:FTEC} is satisfied.
Using the detector error matrix is convenient, as for the case where $t=1$ its columns directly give all syndromes that need to be checked.
In this example, we first construct the detector error matrix, and then prove that the measurement schedule from the previous example is fault-tolerant.
If we choose the detectors
\begin{equation}
d_1 \colon m_1 = 0, \quad d_2 \colon m_2 = 0, \quad d_3 \colon m_1 \oplus m_3 = 0,
\end{equation}
we get the detector error matrix

\begin{equation}
\begin{bmatrix}
1 & 1 & 0 & 0 & 0 & 0 & 0 & 1 & 0 & 0\\
0 & 1 & 1 & 0 & 1 & 0 & 0 & 0 & 1 & 0\\
0 & 0 & 0 & 1 & 1 & 1 & 0 & 1 & 0 & 1\\
\end{bmatrix}.
\end{equation}

The second column of the matrix is the syndrome of the input error $E_2$.
Because the column is unique, $E_2$ has a unique syndrome, i.e.,  $\mathcal{E}([1,1,0], 1) = \{E_2\}$.
The propagated error corresponding to $E_2$ is $[0,1,0]$.
Therefore if this syndrome occurs the correction $c([1,1,0]) = [0,1,0]$ can be applied.
This correction can be implemented by applying an $X$ gate on the second qubit after the measurement schedule.
It satisfies Eq.~\eqref{eq:FTEC}, because $\forall \mathbf{e} \in \mathcal{E}([1,1,0], 1)$, \begin{equation}\abs{c(\mathbf{s}) \oplus \mathbf{e}_{\text{propagated}}} = \abs{[0,1,0] \oplus [0,1,0]} =  0.
\end{equation}
Similarly, the first column is unique, so if $E_1$ occurs the correction $c([1,0,0]) = [1,0,0]$ can be applied to satisfy Eq.~\eqref{eq:FTEC}. 
The input error on the 3rd qubit does not yield a unique syndrome, as $\mathcal{E}([0,1,0], 1) = \{E_3, E_9\}$.
For this syndrome, 
applying correction $X_3$ works: in case $E_9$ occurred, $\abs{c([0,1,0] \oplus \mathbf{e}_\text{propagated})} = 1 = \abs{\mathbf{e}_\text{internal}}$.
If $E_3$ occurred, $\abs{c([0,1,0] \oplus \mathbf{e}_\text{propagated})} = 0$.

For all other syndromes all syndrome consistent errors are internal errors, for which Eq.~\eqref{eq:FTEC} is satisfied if no correction is applied.
We have shown that for every possible syndrome the fault-tolerance condition is satisfied and, therefore, 
this measurement schedule is fault-tolerant with $t=1$. \triangleqed
\end{example}

In the previous example, we have seen that we can check if a measurement schedule is fault-tolerant by looking at the detector error matrix.
In the next example we will show how we can add measurements to a non-fault-tolerant measurement schedule in order to make it fault-tolerant.

\begin{example}[Designing a distance 3 measurement schedule] $\triangleright$
\label{example:distance_3_color_code_measurement_schedule}
\textit{Two-dimensional (2D) 
color codes} \cite{kesselring2022anyon,bombin2006topological} are topological QEC codes that can be defined on a lattice.
Vertices correspond to qubits and each plaquette corresponds to an $X$ and $Z$ stabilizer on the qubits that surround it.
In this example, we will design a measurement schedule for a color code with parameters $\left[12,2,3\right]$:

\begin{equation}
\begin{split}
\tikzfig{colour_code_circuits/12_2_3/q_labels}{0.8}   
\end{split}.
\end{equation}

Here, 
we have labeled the data qubits. 
We would like to find a measurement schedule using as few measurements as possible.
We are going to use the four weight 4 $Z$ \textit{plaquette stabilizers} (the green and blue plaquettes) and one weight 6 $Z$ plaquette stabilizer (the red plaquette).
We start with a schedule measuring the 5 plaquette stabilizers of the code in as few steps as possible,
\begin{equation}
\begin{split}
\tikzfig{colour\_code\_circuits/12\_2\_3/step\_1}{0.4}\;
\tikzfig{colour\_code\_circuits/12\_2\_3/step\_2}{0.4}\;
\tikzfig{colour\_code\_circuits/12\_2\_3/step\_3}{0.4}\\
\end{split}.
\end{equation}

The entire detector error matrix is too large to include here.
This schedule is not fault-tolerant. 
Consider for example an input error on qubit 10 
\begin{equation}
\begin{split}
\tikzfig{colour\_code\_circuits/12\_2\_3/step\_1\_input\_error}{0.4}\;
\tikzfig{colour\_code\_circuits/12\_2\_3/step\_2\_input\_error}{0.4}\;
\tikzfig{colour\_code\_circuits/12\_2\_3/step\_3\_input\_error}{0.4}
\end{split}
\end{equation}
and an internal error on qubit 6
\begin{equation}
\begin{split}
\tikzfig{colour\_code\_circuits/12\_2\_3/step\_1}{0.4}\;
\tikzfig{colour\_code\_circuits/12\_2\_3/step\_2\_internal\_error}{0.4}\;
\tikzfig{colour\_code\_circuits/12\_2\_3/step\_3\_input\_error}{0.4}.
\end{split}
\end{equation}

Because these two errors violate the same two detectors, but have different propagated errors, Eq.~\eqref{eq:FTEC} cannot be satisfied.
A computer program can be used to find all pairs of columns that are equal in the detector error matrix.
There are four pairs; the columns corresponding to input errors on qubits 9, 10, 11 and 12 are equal to the columns corresponding to internal errors before the second measurement round on qubits 5, 6, 7 and 8 respectively.
To make the schedule fault-tolerant we need to measure stabilizers that differentiate the errors in each of the four pairs.
This can be done by repeating the first measurement round consisting of measuring the two blue plaquettes at the end of the schedule.
Alternatively, a single measurement, namely the product of the two blue plaquettes, would also be sufficient.
\triangleqed
\end{example}

Above we have shown how to design a distance 3 measurement schedule.
Previous work shows that one does not have to use the detector error model framework to design measurement schedules \cite{delfosse2020short,delfosse2021beyond}. 
But compared to previous methods, the detector error model framework has two main advantages:
\begin{itemize}
    \item It treats measurement errors and data qubit errors on the same footing, which is useful for designing higher-distance measurement schedules. 
    \item Because equivalent errors are merged, the number of errors that need to be checked is minimized. This was, for example, done for the circuit in Eq.~\eqref{eq:rep_code_measurement_schedule_initial}, which resulted in the circuit in Eq.~\eqref{eq:condensed_rep_code_measurement}.
\end{itemize}
In Appendix \ref{sec:distance_5_measurement_schedules}, we discuss measurement schedules for the [17,1,5] and [19,1,5] color codes.

\section{Designing a fault-tolerant logical measurement procedure}
\label{sec:procedure}

In the previous section, we focused on measurement schedules used in fault-tolerant QEC procedures. In practice, to perform computations on logical qubits, additional procedures are necessary, including logical state preparation, logical gates, and logical measurements. In this section, we employ the framework of detector error models to design a fault-tolerant logical measurement procedure.
Our focus is on non-destructive logical measurement procedures that measure a single logical operator. 

\subsection{Logical measurement procedure}

To extract information from a noisy device, a procedure is needed that returns the correct measurement outcome as long as only a limited number of errors occur.

\begin{definition}{(Fault-tolerant logical measurement procedure \cite{gottesman2010introduction})}
A procedure for measuring a single logical operator is fault-tolerant to distance $d=2t+1$ if provided the number of input errors plus internal errors is at most $t$, the measurement outcome is correct and the output error weight is at most $t$.
\label{definition:fault_tolerant_logical_measurement_protocol}
\end{definition}

Naively, one might measure a logical operator via a fault-tolerant syndrome extraction circuit, as introduced in Definition \ref{def:FT_syndrome_extraction}.
We refer to such a one-off measurement of a logical operator as a \textit{logical measurement (LM) component}.
However, executing a single LM component does not constitute even a distance 3 fault-tolerant logical measurement protocol, as a single error can alter the measurement outcome.
Additionally, repeating the LM component multiple times does not increase the distance, as an input error could flip the measurement outcome of all LM components.

To satisfy Definition \ref{definition:fault_tolerant_logical_measurement_protocol}, error correction needs to be incorporated into a procedure. 
In Section 7.4 of Ref.~\cite{aliferis2005quantum} a distance 3 procedure incorporating error correction was introduced.
We refer to this procedure as the \emph{AGP procedure} -- after the initials of the authors of the original work -- and prove that it is fault-tolerant below.

\subsection{AGP procedure}
\label{sec:proof_correctness_known}

In addition to LM components, the AGP procedure incorporates \emph{quantum error correction} (QEC) components. 
A QEC component is a fault-tolerant QEC procedure, as in Definition \ref{definition:FTEC_protocol}.
The AGP procedure consists of consecutive application of five components: LM, QEC, LM, QEC, LM.  
We will prove the fault-tolerance of this procedure using detector error models.

The noise model that needs to be considered in this case is much simpler than the circuit-level and phenomenological noise models from previous sections.
We have a guarantee on the effect of any internal error and any input error on components, allowing us to treat all input and internal errors as equivalent.
Therefore, in the case of distance $3$ fault tolerance, we may assume that there is only a single possible input error and a single possible internal error that can occur at each component.

We show that the distance $3$ AGP procedure yields the correct measurement outcome as long as only a single error occurs. 
This is done by checking that there is no pair of errors that flips the observable without violating any detectors.
We represent the procedure as a circuit with separate components of the procedure drawn as gates. 
The detailed circuits implementing each component do not matter, as long as they satisfy the respective fault tolerance requirements. 

\begin{center}
\begin{equation}
\includegraphics[valign=c, width=0.85\columnwidth]{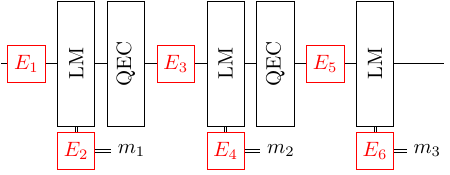}.
\end{equation}
\label{eq:d_3_AGP}
\end{center}
Here, we only depict errors that can affect measurements $m_1, m_2$, and $m_3$, as those are the only errors requiring consideration.
The red boxes labeled $E_1, E_3$ and $E_5$ represent potential locations for single-qubit Pauli errors on data qubits that anti-commute with the measured logical operator.
Similarly, the red boxes labeled $E_2, E_4$ and $E_6$ represent measurement errors that may occur. 

If no errors occur, the measurement results of all LM components should be identical. Let us, therefore, define the two detectors
\begin{equation}
d_1: m_1 \oplus m_2 = 0, \quad d_2: m_2 \oplus m_3 = 0,
\end{equation}
where $m_i$ denotes the measurement result of the $i$-th LM component. 
We choose our observable 
\begin{equation}
o_1: m_3
\end{equation}
to only depend on the final LM component.
The Tanner graph of the procedure can be depicted as
\begin{center}
\begin{equation}
\includegraphics[valign=c]{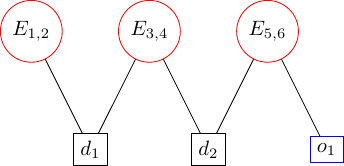}.
\end{equation}
\end{center}
From this Tanner graph, one can observe that there is no combination of two errors that flips $o_1$ but does not violate any detectors. 
Therefore, the correct value of $o_1$ can be obtained if at most one error occurs. 

A natural generalization of the distance 3 AGP procedure is fault-tolerant to higher distance: 
\begin{theorem}[Fault tolerance of the AGP procedure]
\label{theorem:AGP_procedure_is_fault-tolerent}
The distance $d$ generalization of the AGP procedure, LM followed by $d-1$ repetitions of the sequence QEC, LM, is a fault-tolerant logical measurement procedure.
\end{theorem}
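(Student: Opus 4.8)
The plan is to lift the distance-$3$ analysis, treating the components as black boxes carrying only their fault-tolerance guarantees. Represent the procedure as a circuit of $2d-1$ gates, $\mathrm{LM}_1,\mathrm{QEC}_1,\mathrm{LM}_2,\mathrm{QEC}_2,\dots,\mathrm{QEC}_{d-1},\mathrm{LM}_d$, with $m_i$ the outcome reported by $\mathrm{LM}_i$. In a noise-free run every $m_i$ equals the true logical value $\ell$, so the detectors $d_i\colon m_i\oplus m_{i+1}=0$ for $i=1,\dots,d-1$ and the observable $o_1\colon m_d$ are valid, and the Tanner graph is that of a length-$d$ classical repetition code with $o_1$ attached at one end --- just as for $d=3$. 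Proving the theorem then amounts to two claims matching Definition~\ref{definition:fault_tolerant_logical_measurement_protocol}: (i) whenever at most $t=(d-1)/2$ faults occur, a decoder reading the syndrome outputs the correct value of $o_1$, and (ii) whenever at most $t$ faults occur, the output error has weight at most $t$.

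The heart of (i) is the lemma: \emph{if at most $t$ faults occur in total, then the number of indices $i$ with $m_i\neq\ell$ is at most the number of faults}. Group the gates (and the input) into disjoint ``chunks'': $C_1$ consists of $\mathrm{LM}_1$ and the input, and $C_i=\{\mathrm{QEC}_{i-1},\mathrm{LM}_i\}$ for $2\le i\le d$. If $C_i$ is fault-free, then $\mathrm{QEC}_{i-1}$ has zero internal faults while its input fault count is at most the global count, hence at most $t$, so Definition~\ref{definition:FTEC} forces its output error to be trivial; then $\mathrm{LM}_i$ acts on a clean codeword and, being fault-free, reports $m_i=\ell$. Thus $m_i\neq\ell$ requires a fault in $C_i$, and since the chunks are disjoint the lemma follows. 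The key point is that this only ever invokes Definition~\ref{definition:FTEC} on a $\mathrm{QEC}$ component whose input fault count is genuinely at most $t$, so no guarantee is used outside its regime.

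Granting the lemma, consider the decoder that, from the syndrome, predicts that $o_1$ is flipped precisely when $m_d$ disagrees with a strict majority of $m_1,\dots,m_d$ (equivalently, a matching decoder on the repetition-code Tanner graph). With at most $t$ faults the lemma gives at most $t$ wrong outcomes among $d=2t+1$, so the correct value $\ell$ forms the strict majority; the decoder therefore correctly determines whether $m_d$ equals $\ell$, i.e.\ recovers the logical measurement outcome. This establishes (i). For (ii), look only at the trailing pair $\mathrm{QEC}_{d-1},\mathrm{LM}_d$: with at most $t$ faults overall, Definition~\ref{definition:FTEC} bounds the $X$- and $Z$-weights of the output of $\mathrm{QEC}_{d-1}$ by its internal fault count $a$, and feeding this weight-$\le a$ error into $\mathrm{LM}_d$, the fault-tolerance property of syndrome extraction circuits (Definition~\ref{def:FT_syndrome_extraction}) bounds the procedure's output error weight by $a+b$, where $b$ is the internal fault count of $\mathrm{LM}_d$; since $a+b$ is at most the total fault count, this is at most $t$. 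Claims (i) and (ii) together are exactly Definition~\ref{definition:fault_tolerant_logical_measurement_protocol}.

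The step I expect to be the main obstacle is the lemma of (i). It is subtle because $\mathrm{QEC}$ components contain decoders and are therefore nonlinear: the effect of faults on the $m_i$ does not simply add, so the ``repetition code'' Tanner-graph picture is only heuristic once more than one fault is allowed, and one genuinely has to argue through Definition~\ref{definition:FTEC} --- being careful to attribute each fault to a single chunk at the component boundaries and to verify that a fault-free $\mathrm{QEC}$ component erases all upstream damage whenever the global fault budget is respected. The same reasoning can be packaged as an induction on $d$, but this lemma is what carries the argument either way.
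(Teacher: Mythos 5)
Your proof is correct, and it shares the paper's skeleton --- the same component decomposition, the detectors $d_i\colon m_i\oplus m_{i+1}=0$, the observable $o_1\colon m_d$, and the split into outcome-correctness plus an output-weight bound --- but the decisive step is executed by a genuinely different route. The paper proves Lemma \ref{lemma:agp_procedure_2t+1}, a minimum-weight statement (at least $2t+1$ errors are needed to flip $o_1$ without violating any detector), via a chain argument on the repetition-code Tanner graph in the abstracted noise model where each component has a single input-error and a single internal-error location; correctability of $t$ errors is then inferred from this distance bound. You instead bound the number of wrong outcomes directly: the disjoint-chunk argument together with Definition \ref{definition:FTEC} shows that at most $t$ of the $m_i$ differ from the true logical value $\ell$, and an explicit majority-vote decoder on the chain then recovers $\ell$. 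What your route buys is that it never relies on the effects of multiple faults combining linearly on the $m_i$ --- a point that is delicate for components containing decoders and that the Tanner-graph picture takes for granted --- and it makes the decoder explicit rather than inferring its existence from a distance bound; what the paper's route buys is that it stays inside the detector-error-model formalism the work is advocating, and your treatment of the output-weight condition via the trailing $\mathrm{QEC}_{d-1},\mathrm{LM}_d$ pair is likewise a more explicit version of the paper's one-line appeal to the component properties. The one place your argument leans on an unstated step is the claim that a fault-free $\mathrm{QEC}_{i-1}$ sees an input error of weight at most $t$: this needs the short induction that every component maps an input error of weight $v$ with $w$ internal faults to an output error of weight at most $v+w$ (LM, via Definition \ref{def:FT_syndrome_extraction}) or at most $w$ (QEC, via Definition \ref{definition:FTEC}), so that the accumulated error weight never exceeds the running fault count and each component's guarantee is only invoked inside its regime; you flag exactly this caveat yourself, and the induction does go through.
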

We prove this in Appendix \ref{sec:proof_AGP_d}.

\subsection{Procedure using fault-tolerant error detection}

To generate a shorter fault-tolerant logical measurement procedure, we introduce a new component, the \emph{error detection} (QED) component.
We define a QED component to be a circuit that performs \emph{fault-tolerant error detection}.

\begin{definition}{(Fault-tolerant error detection procedure)} 
An error detection procedure is fault-tolerant to distance $d=2t+1$, if provided the number of input error is at most $t$ and there are no internal errors, it detects that an error has occurred.  
The output error weight is at most the number of internal plus input errors.
\end{definition}

Note that according to this definition, the detection of input errors is only required in the absence of internal errors.
Additionally, our definition requires only the detection of up to $t$ input errors.
This criterion is considerably less restrictive compared to that of a fault-tolerant error correction procedure, where $2t$ input errors need to be detected.

\begin{theorem}[Fault-tolerant measurement using QED]
\label{thm:fault_tolerance_distance_d_new_procedure}
Distance $d$ fault-tolerant measurement of a logical operator can be implemented using the following sequence; LM followed by $(d-1)/2$ iterations of the sequence: QED, LM, QEC, LM.
\end{theorem}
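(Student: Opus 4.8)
The plan is to follow the template of the distance-$3$ argument above and of Theorem~\ref{theorem:AGP_procedure_is_fault-tolerent}. Write $t=(d-1)/2$, so the procedure is $\mathrm{LM}_0$ followed by $t$ \emph{blocks} $B_j=(\mathrm{QED}_j,\mathrm{LM}_{2j-1},\mathrm{QEC}_j,\mathrm{LM}_{2j})$, for a total of $2t+1=d$ LM components with outcomes $m_0,\dots,m_{2t}$. Represent the procedure as a circuit whose gates are the components, with one input-error slot and one internal-error slot at each component; as in the distance-$3$ AGP case, the fault-tolerance guarantees on the components let us treat all errors at a component as equivalent. Pick the detectors $d_i\colon m_{i-1}\oplus m_i=0$ for $i=1,\dots,2t$ together with the internal detectors of every QED and QEC component, and the observable $o\colon m_{2t}$. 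By Definition~\ref{definition:fault_tolerant_logical_measurement_protocol} it then suffices to show (a) the circuit has circuit distance at least $d$, so the decoder returns the correct value of $o$ whenever at most $t$ errors occur, and (b) the output error has weight at most $t$ whenever at most $t$ errors occur.

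Part (b) is the AGP argument verbatim: the last two components are $\mathrm{QEC}_t$ then $\mathrm{LM}_{2t}$; when at most $t$ errors occur the error entering $\mathrm{QEC}_t$ has weight at most $t$, so by Definition~\ref{definition:FTEC} the QEC component emits an error of weight at most its own internal-error count, and $\mathrm{LM}_{2t}$ can increase this by at most its own internal-error count, for a total of at most $t$.

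Part (a) is the heart of the matter. Assume for contradiction that an error set $\mathcal F$ with $|\mathcal F|\le 2t$ flips $o$ while violating no detector. Since $d_1,\dots,d_{2t}$ are not violated, $m_0=\dots=m_{2t}$, and since $o$ is flipped this common value is wrong, so every LM component reads the wrong logical value. I would then show: making $m_0$ wrong costs at least one error at $\mathrm{LM}_0$ (its input slot or internal slot); and making both $m_{2j-1}$ and $m_{2j}$ wrong, while violating none of the internal detectors of $B_j$, costs at least two errors among the four slots of $B_j$. Summing the pairwise-disjoint pools gives $|\mathcal F|\ge 1+2t>2t$, a contradiction. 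The block estimate splits as one error ``before the odd reading'' and one ``before the even reading'': a QED component with no internal error, fed a nonzero incoming error, raises its flag and so violates one of its internal detectors, whence if $\mathrm{QED}_j$ carries no internal error the data entering it must already carry the correct logical value, forcing an error in $\mathrm{LM}_{2j-1}$ to corrupt $m_{2j-1}$; and a QEC component with no internal error restores the unique correct codeword from an incoming error, forcing an error in $\mathrm{LM}_{2j}$ to corrupt $m_{2j}$. Checking the few ways the two internal errors of $B_j$ can be distributed over $\{\mathrm{QED}_j,\mathrm{LM}_{2j-1}\}$ versus $\{\mathrm{QEC}_j,\mathrm{LM}_{2j}\}$ completes the count.

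The step I expect to be the main obstacle is justifying that the component guarantees — which are asserted only for incoming error weight at most $t$ — still apply at the relevant places even though $\mathcal F$ may contain up to $2t$ errors and hence, \emph{a priori}, build up a data error of weight larger than $t$. I plan to close this with the code distance: since $|\mathcal F|\le 2t=d-1$, every data error that ever appears has weight below $d$ and is therefore never equivalent to a logical operator, so any data error that anti-commutes with the measured logical operator is non-stabilizer, has nonzero syndrome, and is flagged by a QED component with no internal error regardless of its weight; and one uses that a fault-tolerant QEC gadget itself has circuit distance at least $d$, so an incoming error of weight below $d$ that it would map to a logical error must light one of its internal detectors, contradicting our assumption that none are violated. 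If this bookkeeping proves unwieldy, the alternative is to prove (a) by induction on $d$ in steps of two, peeling off the final block $(\mathrm{QED},\mathrm{LM},\mathrm{QEC},\mathrm{LM})$ and showing it lifts the circuit distance from $d-2$ to $d$.
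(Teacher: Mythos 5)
Your proposal is correct and follows essentially the same route as the paper: Appendix \ref{subsec:proof_new_d} likewise reduces the theorem to a lemma that at least $2t+1$ errors are needed to flip the observable without violating any detectors --- one error charged to the initial LM plus two per $(\mathrm{QED},\mathrm{LM},\mathrm{QEC},\mathrm{LM})$ block, with an equivalent choice of detectors (the QED flag constraint $m_{l+1}=0$ playing the role of your ``internal detectors'') --- and then obtains the output-weight bound directly from the component guarantees. The only real difference is that the paper stays entirely inside the detector error model abstraction, where each error location is assigned a fixed set of violated detectors and the ``two errors per block'' claim is read off the Tanner graph as a chain argument, so the input-weight subtlety you flag at the end (whether the QED/QEC guarantees, stated only for weight $\le t$ inputs, remain valid along a $2t$-error configuration) simply never arises there; your operational re-derivation is a legitimate finer-grained concern but is not part of, nor needed for, the paper's argument.
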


We again first prove the correctness of the distance 3 version of the new procedure. 
Compared to the distance 3 AGP procedure there is only one difference: the first QEC component is replaced with a QED component
\begin{equation}
\includegraphics[valign=c, width=0.85\columnwidth]{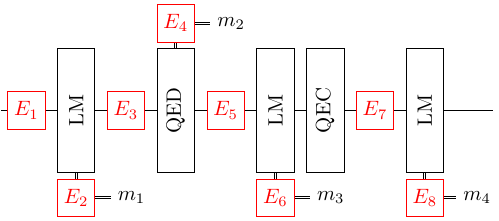}.
\end{equation}
We define three detectors and one observable as 
\begin{equation}
\begin{alignedat}{2}
d_1 &\colon m_1 \oplus m_3 = 0 ,\quad d_2 &&\colon m_2 = 0 ,\\
d_3 &\colon m_3 \oplus m_4 = 0 ,\quad o_1 &&\colon m_4 .
\end{alignedat}
\end{equation}
Here, $m_i \in \{1, 3, 4\}$ denotes the measurement result of the $i$-th LM component, and $m_2$ indicates whether the QED component has detected an error. 
The Tanner graph of this procedure shows that there is no combination of two errors the flips $o_1$ without violating any detector
\begin{center}
\begin{equation}
\includegraphics[valign=c]{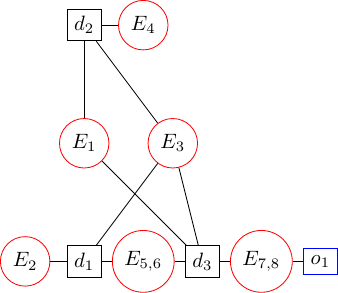}.
\end{equation}
\end{center}
In Appendix \ref{subsec:proof_new_d}, 
we give the full proof of Theorem \ref{thm:fault_tolerance_distance_d_new_procedure}.

\section{Conclusions and outlook}
\label{sec:conclusion}

In this work, we provide a pedagogical introduction to detector error models.
Currently, the QEC community lacks a standardized format for conveying information about a circuit to a decoder.
Because a detector error model contains all the information necessary to decode any Clifford circuit, the framework is ideally suited to fill this void.
The programming package \emph{Stim}, in which detector error models were originally proposed, can help facilitate this transition \cite{gidney2021stim}.
Our pedagogical introduction to detector error models aims to promote their widespread use.

In addition to the pedagogical introduction, we formalize how to design fault-tolerant circuits at various levels of abstraction using detector error models.
At the lowest level of abstraction in fault-tolerant circuit design, we constructed syndrome extraction circuits for the surface code that can correct twice as many measurement errors as conventional circuits.
We benchmarked our circuits by analyzing the logical error rates in memory and stability experiments under a measurement-biased noise model and a superconducting-inspired noise model. 
The proposed circuits demonstrate greater efficiency for circuit-level noise with higher measurement bias.
In the future, it would be interesting to apply our syndrome extraction circuits to other codes.

At a higher level of abstraction, we developed a systematic method to design measurement schedules.
We applied this method to design measurement schedules for distance 3 and distance 5 color codes.
For the triangular distance 5 color code on a hexagonal lattice, our measurement schedules are shorter than previously known schedules.
In the future, addressing the open problems involving measurement schedules given in Ref.\ \cite{delfosse2020short}, would be of interest. 
Using the detector error model framework may facilitate progress on these problems.

At the highest level of abstraction, we have designed a fault-tolerant logical measurement procedure for arbitrary codes.
The proposed procedure utilizes error detection to reduce time overhead.
Future investigations could focus on the logical error rate of this procedure. 
Additionally, using the detector error model framework may be beneficial in designing procedures for measuring multiple logical operators or operators acting on multiple logical qubits.

Throughout this work, our focus in designing fault-tolerant circuits has been on maximizing the number of errors the circuits are resilient to. 
While the number of errors a circuit can tolerate is a key measure of its error-correcting capabilities, it is not the only important metric. 
Different codes with the same distance can exhibit varying logical error rates, and similarly, different circuits implementing the same code can also have different logical error rates. 
An intriguing research avenue would be to explore how the properties of a detector error model influence the logical error rate of a circuit.

\section*{Acknowledgements}
We thank Craig Gidney for open-sourcing Stim and answering questions on the Quantum Computing StackExchange.
We thank Earl Campbell for his lectures on detector error models at the IBM Summer School.
We thank Lennart Bittel and Ben Criger for insightful discussions.
We thank Armanda Quintavalle, Tamas Noszko, and Julio Magdelena for giving helpful feedback on an earlier version of this work. This work has been supported by the BMBF (RealistiQ, QSolid, MuniQC-Atoms), the DFG (CRC 183), the Munich Quantum Valley (K-8), the Quantum Flagship (Millenion),
Berlin Quantum, and the Einstein Foundation (Einstein Research Unit on Quantum Devices).

\section*{Author contributions}
PJD, ATT, and AGB formalized the detector error model framework as presented here. 
PJD designed the majority of the circuits and performed numerics.
PJD wrote the manuscript with substantial input and feedback from all authors.

\bibliographystyle{quantum}
\bibliography{citations}

\begin{thebibliography}{10}

\bibitem{roffe2019quantum}
J.~Roffe.
\newblock ``Quantum error correction: an introductory guide''.
\newblock \href{https://dx.doi.org/10.1080/00107514.2019.1667078}{Contemp.
  Phys. {\bf 60}, 226--245}~(2019).

\bibitem{RevModPhys.87.307}
B.~M. Terhal.
\newblock ``Quantum error correction for quantum memories''.
\newblock \href{https://dx.doi.org/10.1103/RevModPhys.87.307}{Rev. Mod. Phys.
  {\bf 87}, 307--346}~(2015).

\bibitem{Roads}
E.~T. Campbell, B.~M. Terhal, and C.~Vuillot.
\newblock ``Roads towards fault-tolerant universal quantum computation''.
\newblock \href{https://dx.doi.org/10.1038/nature23460}{Nature {\bf 549},
  172--179}~(2017).

\bibitem{Roadmap}
A.~Acin, I.~Bloch, H.~Buhrman, T.~Calarco, C.~Eichler, J.~Eisert, J.~Esteve,
  N.~Gisin, S.~J. Glaser, F.~Jelezko, S.~Kuhr, M.~Lewenstein, M.~F. Riedel,
  P.~O. Schmidt, R.~Thew, A.~Wallraff, I.~Walmsley, and F.~K. Wilhelm.
\newblock ``{The European quantum technologies roadmap}''.
\newblock \href{https://dx.doi.org/10.1088/1367-2630/aad1ea}{New J. Phys. {\bf
  20}, 080201}~(2018).

\bibitem{chamberland2020topological}
C.~Chamberland, G.~Zhu, T.~J. Yoder, J.~B Hertzberg, and A.~W. Cross.
\newblock ``Topological and subsystem codes on low-degree graphs with flag
  qubits''.
\newblock \href{https://dx.doi.org/10.1103/PhysRevX.10.011022}{Phys. Rev. X
  {\bf 10}, 011022}~(2020).

\bibitem{gottesman1997stabilizer}
D.~Gottesman.
\newblock ``Stabilizer codes and quantum error correction''.
\newblock \href{https://dx.doi.org/10.48550/arXiv.quant-ph/9705052}{California
  Institute of Technology}. ~(1997).

\bibitem{gidney2021stim}
C.~Gidney.
\newblock ``Stim: a fast stabilizer circuit simulator''.
\newblock \href{https://dx.doi.org/10.22331/q-2021-07-06-497}{{Quantum} {\bf
  5}, 497}~(2021).

\bibitem{higgott2023sparse}
Oscar Higgott and Craig Gidney.
\newblock ``Sparse {B}lossom: correcting a million errors per core second with
  minimum-weight matching''.
\newblock \href{https://dx.doi.org/10.22331/q-2025-01-20-1600}{{Quantum} {\bf
  9}, 1600}~(2025).

\bibitem{gottesman1998heisenberg}
D.~Gottesman.
\newblock ``{The Heisenberg representation of quantum computers}''~(1998).
\newblock
  \href{http://arxiv.org/abs/quant-ph/9807006}{arXiv:quant-ph/9807006}.

\bibitem{PhysRevA.70.052328}
S.~Aaronson and D.~Gottesman.
\newblock ``Improved simulation of stabilizer circuits''.
\newblock \href{https://dx.doi.org/10.1103/PhysRevA.70.052328}{Phys. Rev. A
  {\bf 70}, 052328}~(2004).

\bibitem{PhysRevA.73.022334}
S.~Anders and H.~J. Briegel.
\newblock ``Fast simulation of stabilizer circuits using a graph-state
  representation''.
\newblock \href{https://dx.doi.org/10.1103/PhysRevA.73.022334}{Phys. Rev. A
  {\bf 73}, 022334}~(2006).

\bibitem{delafuente2024xyzrubycodemaking}
Julio~C. Magdalena de~la Fuente, Josias Old, Alex Townsend-Teague, Manuel
  Rispler, Jens Eisert, and Markus M\"uller.
\newblock ``$\mathrm{XYZ}$ ruby code: Making a case for a three-colored
  graphical calculus for quantum error correction in spacetime''.
\newblock \href{https://dx.doi.org/10.1103/PRXQuantum.6.010360}{PRX Quantum
  {\bf 6}, 010360}~(2025).

\bibitem{McEwen2023relaxinghardware}
M.~McEwen, D.~Bacon, and C.~Gidney.
\newblock ``Relaxing {h}ardware {r}equirements for {s}urface {C}ode {c}ircuits
  using {t}ime-dynamics''.
\newblock \href{https://dx.doi.org/10.22331/q-2023-11-07-1172}{{Quantum} {\bf
  7}, 1172}~(2023).

\bibitem{Bombin_2024}
H.~Bombin, D.~Litinski, N.~Nickerson, F.~Pastawski, and S.~Roberts.
\newblock ``{Unifying flavors of fault tolerance with the ZX calculus}''.
\newblock \href{https://dx.doi.org/10.22331/q-2024-06-18-1379}{Quantum {\bf 8},
  1379}~(2024).

\bibitem{rodatz2024floquetifyingstabilisercodesdistancepreserving}
B.~Rodatz, B.~Poór, and A.~Kissinger.
\newblock ``Floquetifying stabiliser codes with distance-preserving
  rewrites''~(2024).
\newblock  \href{http://arxiv.org/abs/2410.17240}{arXiv:2410.17240}.

\bibitem{delfosse2023spacetime}
Nicolas Delfosse and Adam Paetznick.
\newblock ``Spacetime codes of clifford circuits''~(2023).
\newblock  \href{http://arxiv.org/abs/2304.05943}{arXiv:2304.05943}.

\bibitem{rall2019simulation}
P.~Rall, D.~Liang, J.~Cook, and W.~Kretschmer.
\newblock ``{Simulation of qubit quantum circuits via Pauli propagation}''.
\newblock \href{https://dx.doi.org/10.1103/PhysRevA.99.062337}{Phys. Rev. A
  {\bf 99}, 062337}~(2019).

\bibitem{tanner1056404}
R.~Tanner.
\newblock ``A recursive approach to low complexity codes''.
\newblock \href{https://dx.doi.org/10.1109/TIT.1981.1056404}{IEEE Trans. Inf.
  Th. {\bf 27}, 533--547}~(1981).

\bibitem{loeliger2004introduction}
H.-A. Loeliger.
\newblock ``An introduction to factor graphs''.
\newblock \href{https://dx.doi.org/10.1109/MSP.2004.1267047}{IEEE Sig. Proc.
  Mag. {\bf 21}, 28--41}~(2004).

\bibitem{gidney2022qec}
C.~Gidney.
\newblock ``{Lecture at IBM quantum error correction summer school:
  Benchmarking codes fast with Stim}''~(2022).

\bibitem{deo1982algorithms}
N.~Deo, G.~Prabhu, and M.~S. Krishnamoorthy.
\newblock ``Algorithms for generating fundamental cycles in a graph''.
\newblock \href{https://dx.doi.org/10.1145/355993.355995}{ACM Trans. Math.
  Softw. (TOMS) {\bf 8}, 26--42}~(1982).

\bibitem{dua2022clifford}
A.~Dua, A.~Kubica, L.~Jiang, S.~T. Flammia, and M.~J. Gullans.
\newblock ``Clifford-deformed surface codes''.
\newblock \href{https://dx.doi.org/10.1103/PRXQuantum.5.010347}{PRX Quantum
  {\bf 5}, 010347}~(2024).

\bibitem{tiurev2022correcting}
K.~Tiurev, P.-J. H.~S. Derks, J.~Roffe, J.~Eisert, and J.-M. Reiner.
\newblock ``Correcting non-independent and non-identically distributed errors
  with surface codes''.
\newblock \href{https://dx.doi.org/10.22331/q-2023-09-26-1123}{Quantum {\bf 7},
  1123}~(2023).

\bibitem{roffe2022bias}
J.~Roffe, L.~Z. Cohen, A.~O. Quintivalle, D.~Chandra, and E.~T. Campbell.
\newblock ``{Bias-tailored quantum LDPC codes}''.
\newblock \href{https://dx.doi.org/10.22331/q-2023-05-15-1005}{Quantum {\bf 7},
  1005}~(2023).

\bibitem{newtiurev}
K.~Tiurev, A.~Pesah, P.-J. H.~S. Derks, J.~Roffe, J.~Eisert, M.~S. Kesselring,
  and J.-M. Reiner.
\newblock ``Domain wall color code''.
\newblock \href{https://dx.doi.org/10.1103/PhysRevLett.133.110601}{Phys. Rev.
  Lett. {\bf 133}, 110601}~(2024).

\bibitem{Bonilla-Ataides2020}
J.~Pablo Bonilla-Ataides, David~K. Tuckett, S.~D. Bartlett, S.~T. Flammia, and
  B.~J. Brown.
\newblock ``{The XZZX surface code}''.
\newblock \href{https://dx.doi.org/10.1038/s41467-021-22274-1}{Nature Comm.
  {\bf 12}, 2172}~(2021).

\bibitem{darmawan2021practical}
A.~S. Darmawan, B.~J. Brown, A.~L. Grimsmo, D.~K. Tuckett, and S.~Puri.
\newblock ``{Practical quantum error correction with the XZZX code and Kerr-cat
  qubits}''.
\newblock \href{https://dx.doi.org/10.1103/PRXQuantum.2.030345}{PRX Quantum
  {\bf 2}, 030345}~(2021).

\bibitem{kitaev2003fault}
A.~Y. Kitaev.
\newblock ``Fault-tolerant quantum computation by anyons''.
\newblock \href{https://dx.doi.org/10.1016/S0003-4916(02)00018-0}{Ann. Phys.
  {\bf 303}, 2--30}~(2003).

\bibitem{dennis2002topological}
E.~Dennis, A.~Y. Kitaev, A.~Landahl, and J.~Preskill.
\newblock ``Topological quantum memory''.
\newblock \href{https://dx.doi.org/10.1063/1.1499754}{J. Math. Phys. {\bf 43},
  4452--4505}~(2002).

\bibitem{fowler2012surface}
A.~G. Fowler, M.~Mariantoni, J.~M. Martinis, and A.~N. Cleland.
\newblock ``Surface codes: Towards practical large-scale quantum computation''.
\newblock \href{https://dx.doi.org/10.1103/PhysRevA.86.032324}{Phys. Rev. A
  {\bf 86}, 032324}~(2012).

\bibitem{gidney2022benchmarking}
C.~Gidney, M.~Newman, and M.~McEwen.
\newblock ``Benchmarking the planar honeycomb code''.
\newblock \href{https://dx.doi.org/10.22331/q-2022-09-21-813}{Quantum {\bf 6},
  813}~(2022).

\bibitem{brown2021conservation}
B.~J. Brown.
\newblock ``Conservation laws and quantum error correction: Toward a
  generalized matching decoder''.
\newblock \href{https://dx.doi.org/10.1109/MBITS.2023.3246025}{IEEE Bits Inf.
  Th. Mag. {\bf 2}, 5--19}~(2022).

\bibitem{github_repo}
The source code to generate the circuits, run the experiments, and generate the
  plots is available at
  \url{https://github.com/peter-janderks/short_measurement_schedules_simulations}.

\bibitem{gottesman2010introduction}
D.~Gottesman.
\newblock ``An introduction to quantum error correction and fault-tolerant
  quantum computation''~(2009).
\newblock  \href{http://arxiv.org/abs/0904.2557}{arXiv:0904.2557}.

\bibitem{steane1997active}
A.~M. Steane.
\newblock ``Active stabilization, quantum computation, and quantum state
  synthesis''.
\newblock \href{https://dx.doi.org/10.1103/PhysRevLett.78.2252}{Phys. Rev.
  Lett. {\bf 78}, 2252}~(1997).

\bibitem{knill2005scalable}
E.~Knill.
\newblock ``Scalable quantum computing in the presence of large detected-error
  rates''.
\newblock \href{https://dx.doi.org/10.1103/PhysRevA.71.042322}{Phys. Rev. A
  {\bf 71}, 042322}~(2005).

\bibitem{shor1996fault}
P.W. Shor.
\newblock ``Fault-tolerant quantum computation''.
\newblock In Proceedings of 37th Conference on Foundations of Computer Science.
\newblock \href{https://dx.doi.org/10.1109/SFCS.1996.548464}{Pages 56--65}.
\newblock ~(1996).

\bibitem{delfosse2020short}
N.~Delfosse and B.~W. Reichardt.
\newblock ``{Short Shor-style syndrome sequences}''~(2020).
\newblock  \href{http://arxiv.org/abs/2008.05051}{arXiv:2008.05051}.

\bibitem{kesselring2022anyon}
M.~S Kesselring, J.~C.~Magdalena de~la Fuente, F.~Thomsen, J.~Eisert, S.~D.
  Bartlett, and B.~J. Brown.
\newblock ``Anyon condensation and the color code''.
\newblock \href{https://dx.doi.org/10.1103/PRXQuantum.5.010342}{PRX Quantum
  {\bf 5}, 010342}~(2024).

\bibitem{bombin2006topological}
H.~Bombin and M.~A. Martin-Delgado.
\newblock ``Topological quantum distillation''.
\newblock \href{https://dx.doi.org/10.1103/PhysRevLett.97.180501}{Phys. Rev.
  Lett. {\bf 97}, 180501}~(2006).

\bibitem{delfosse2021beyond}
N.~Delfosse, B.~W Reichardt, and K.~M. Svore.
\newblock ``Beyond single-shot fault-tolerant quantum error correction''.
\newblock \href{https://dx.doi.org/10.1109/TIT.2021.3120685}{IEEE Trans. Inf.
  Th. {\bf 68}, 287--301}~(2021).

\bibitem{aliferis2005quantum}
P.~Aliferis, D.~Gottesman, and J.~Preskill.
\newblock ``Quantum accuracy threshold for concatenated distance-3 codes''.
\newblock \href{https://dx.doi.org/10.5555/2011665.2011666}{Quant. Inf. Comp.
  {\bf 6}, 97--165}~(2006).

\bibitem{horsman2012surface}
D.~Horsman, A.~G. Fowler, S.~Devitt, and R.~Van~Meter.
\newblock ``Surface code quantum computing by lattice surgery''.
\newblock \href{https://dx.doi.org/10.1088/1367-2630/14/12/123011}{New J. Phys.
  {\bf 14}, 123011}~(2012).

\bibitem{litinski2019game}
D.~Litinski.
\newblock ``A game of surface codes: Large-scale quantum computing with lattice
  surgery''.
\newblock \href{https://dx.doi.org/10.22331/q-2019-03-05-128}{Quantum {\bf 3},
  128}~(2019).

\bibitem{PRXQuantum.3.010331}
C.~Chamberland and E.~T. Campbell.
\newblock ``Universal quantum computing with twist-free and temporally encoded
  lattice surgery''.
\newblock \href{https://dx.doi.org/10.1103/PRXQuantum.3.010331}{PRX Quantum
  {\bf 3}, 010331}~(2022).

\bibitem{higgott2023improved}
O.~Higgott, T.~C. Bohdanowicz, A.~Kubica, S.~T. Flammia, and E.~T. Campbell.
\newblock ``Improved decoding of circuit noise and fragile boundaries of
  tailored surface codes''.
\newblock \href{https://dx.doi.org/10.1103/PhysRevX.13.031007}{Phys. Rev. X
  {\bf 13}, 031007}~(2023).

\bibitem{gidney2022stability}
C.~Gidney.
\newblock ``Stability experiments: The overlooked dual of memory experiments''.
\newblock \href{https://dx.doi.org/10.22331/q-2022-08-24-786}{Quantum {\bf 6},
  786}~(2022).

\bibitem{gidney2021fault}
C.~Gidney, M.~Newman, A.~Fowler, and M.~Broughton.
\newblock ``A fault-tolerant honeycomb memory''.
\newblock \href{https://dx.doi.org/10.22331/q-2021-12-20-605}{Quantum {\bf 5},
  605}~(2021).

\bibitem{higgott2022pymatching}
O.~Higgott.
\newblock ``Pymatching: A python package for decoding quantum codes with
  minimum-weight perfect matching''.
\newblock \href{https://dx.doi.org/10.1145/3505637}{ACM Trans. Quant. Comp.
  {\bf 3}, 1--16}~(2022).

\bibitem{prabhu2021distance}
P.~Prabhu and B.~W. Reichardt.
\newblock ``Distance-four quantum codes with combined postselection and error
  correction''.
\newblock \href{https://dx.doi.org/10.1103/PhysRevA.110.012419}{Phys. Rev. A
  {\bf 110}, 012419}~(2024).

\end{thebibliography}

\appendix

\section{Circuit-level noise simulation details}
\label{sec:simulation_details}

\subsection{Simulation setup}

The most efficient known way to perform fault-tolerant computations using surface codes on a 2D grid of qubits with nearest-neighbor connectivity is lattice surgery \cite{horsman2012surface, litinski2019game}.
In computations using lattice surgery, there are two types of logical errors that can occur; space-like and time-like logical errors \cite{PRXQuantum.3.010331, higgott2023improved}. 
A logical $\ket{0}_{L}$ memory experiment can be used to calculate the error rate of a space-like logical $X$ error. 
A logical $Z$-stability experiment can be used to calculate the error rate of a time-like logical error flipping a computational basis measurement result \cite{gidney2022stability}.

\begin{table}
    \centering
    \begin{tabular}{|p{10mm}|p{33mm}|p{10mm}|p{10mm}|}
        \hline
        Gate & Noise applied after gate & SDMB model & SI model \\
        \hline \hline
        H & $X,Y,Z$ & $p$ & $p/10$ \\
        \hline
        CNOT & $\{I,X,Y,Z\}^{\times 2} \backslash \{II\}$ & $p$ & $p$  \\
        \hline
        Init. $\ket{0}$ & $X$ & $p$ & $2p$  \\
        \hline
        Meas. $Z$ & Bit-flip & $\eta \times p$ & $5p$  \\
        \hline
        Idle & $X,Y,Z$ & $p$ & $p/10$ \\
        \hline
        Res. idle* & $X,Y,Z$ & 0 & $2p$ \\
        \hline
    \end{tabular}
    \caption{Exact details of the noise models used for numerical simulations. The leftmost column contains all types of operations that are applied in the circuits. The second column lists which errors can occur after a gate. If multiple errors are given, one is chosen uniformly at random. In the third and fourth column, the probabilities are given with which an error occurs in the two noise models.*Resonator idle refers to a circuit location during which a qubit is not measured or reset in a time step during which other qubits are being measured or reset.}
    \label{tab:noise_table}
\end{table}

To benchmark the syndrome extraction circuits with two auxiliary qubits, we simulate memory and stability experiments subject to two different circuit-level noise models.  
We refer to the first model as \emph{standard depolarizing measurement-biased} (SDMB) noise. 
In SDMB, all gate errors happen with probability $p$ and measurement errors happen with probability $\eta \times p$, where $\eta$ quantifies the bias of the noise model towards measurement errors.
The second noise model is based on the \emph{superconducting inspired} (SI) noise model from Ref.~\cite{gidney2022benchmarking}.
Our version is not fully equivalent as our circuits use CNOTs instead of CZs.
For details on the noise models see Table \ref{tab:noise_table}. A more realistic comparison could take hardware specific aspects such as qubit connectivity into account, but this is beyond the scope of the present analysis.

The goal of designing QEC circuits is to reach a low logical error rate using as few qubits as possible.
To this end, one metric used is the
\textit{teraquop footprint} of a circuit, which represents the number of physical qubits required to reach a logical error rate of $10^{-12}$, the so called \textit{teraquop regime} \cite{gidney2021fault}.
To determine the logical error rate at a fixed distance and fixed physical error rate we perform Monte Carlo simulations. 
These involve sampling from the noise model and decoding the resulting errors using the open-source \emph{PyMatching} decoder, which implements a minimum-weight perfect matching algorithm \cite{higgott2022pymatching}. 

To then determine the teraquop footprint at a fixed physical error rate we perform these Monte Carlo simulations for various circuit distances $d$.
For each distance we generate $10^7$ samples or continue until $10^3$ logical errors are observed, whichever occurs first. 
From these simulations, we obtain a curve showing how the logical error rate scales with distance.
The distance required to reach the teraquop regime is identified by locating the point at which this curve intersects with the logical error rate of $10^{-12}$. Once the required distance is determined, we can calculate the corresponding teraquop footprint.
For single-measurement syndrome extraction, the number of qubits scales as $2d^2 - 1$, whereas for double-measurement syndrome extraction, it scales as $3d^2 - 2$.
We repeat this entire procedure for a range of different physical error rates.

\subsection{Results for measurement-biased noise}

\begin{figure}
\centering\includegraphics[width=\columnwidth]{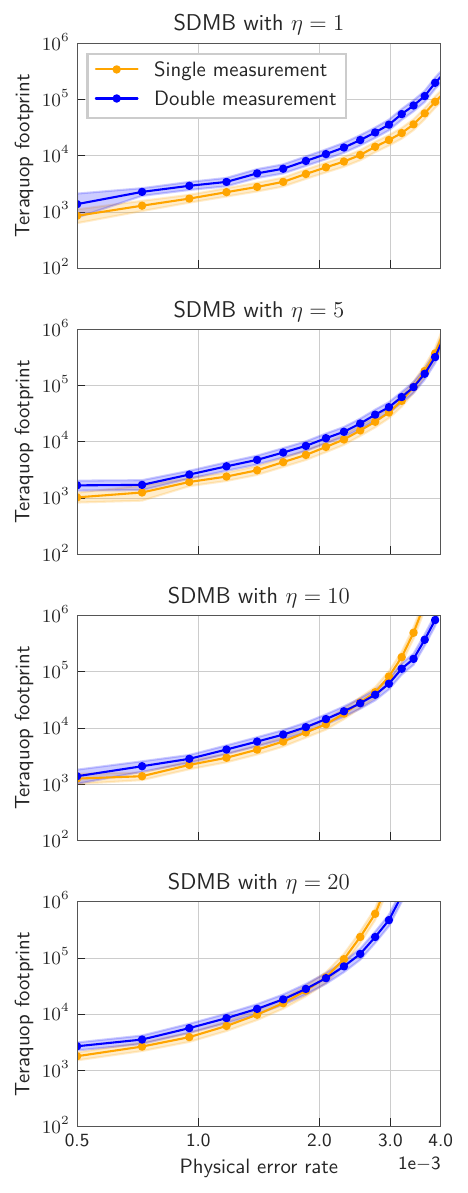}
    \caption{Teraquop plots obtained using memory experiments with the SDMB noise model for measurement biases $\eta = 1, 5, 10, 20$.}
\label{fig:memory_teraquop_plot}
\end{figure}

\begin{figure}
\centering\includegraphics[width=\columnwidth]{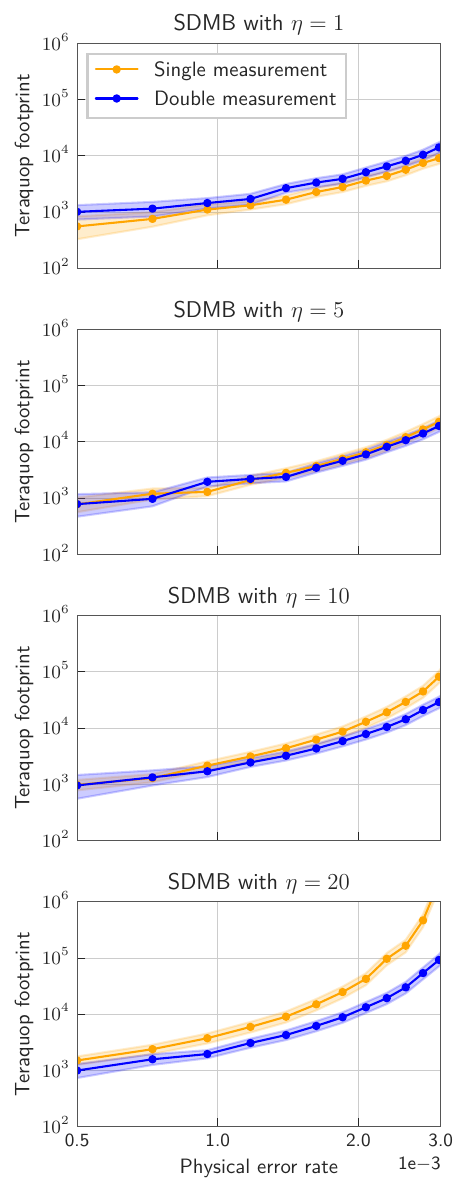}
    \caption{Teraquop plots obtained using stability experiments with the SDMB noise model for measurement biases $\eta = 1, 5, 10, 20$.}
    \label{fig:stability_teraquop_plot}
\end{figure}

The memory teraquop plots are shown in Fig.~\ref{fig:memory_teraquop_plot}.
To generate these plots we have simulated memory experiments with 4 different measurement biases $\eta = 1, 5, 10, 20$ using distances $d = 5, 7, 9, 11, 13$.
From the teraquop plots we draw two main conclusions.
First, the space-like logical error rate is relatively immune to increased measurement noise.
At a physical error rate of $10^{-3}$, the distance required to reach the teraquop regime only scales by a factor of approximately 2 when increasing the measurement bias $\eta$ from 1 to 20 for the single-measurement syndrome extraction circuits and a factor of approximately 1.5 for the double-measurement syndrome extraction circuits.
Second, there is a noise regime where double-measurement syndrome extraction is more efficient in terms of qubit count. 
Specifically, for $\eta=5$, $\eta=10$ and $\eta=20$ this regime is at physical error rates above $3.5 \times 10^{-3}, 2.5 \times 10^{-3}$ and $2 \times 10^{-3}$ respectively.

For stability experiments, we ran $d$ rounds of syndrome extraction with surface code patch diameter $d$, for $d \in \{4,8,10,12,14\}$.
The teraquop footprint plots are shown in Fig.~\ref{fig:stability_teraquop_plot}.
These plots show that for $\eta=10$ and $\eta=20$ the teraquop footprint of the double syndrome extraction circuits is lower.

\subsection{Results for superconducting inspired noise}

The teraquop plots for the SI1000 noise model are shown in Fig.~\ref{fig:SI1000_teraquop}.
For the SI1000 noise model, the results indicate that the measurement noise bias is not strong enough for the proposed double-measurement syndrome extraction to be more efficient. 
In the memory experiment plot it can be seen that for $p > 2.5 \times 10^{-3}$ the double-measurement syndrome extraction is more efficient. 
For lower physical error rates the double measurement syndrome extraction circuit uses approximately twice as many qubits. 
The stability plot shows that the footprint of the double-measurement syndrome extraction circuit is smaller for $ p > 2 \times 10^{-3}$ and approximately equal for lower error rates.

\newpage 

\begin{figure}
    \centering
    \includegraphics[width=\columnwidth]{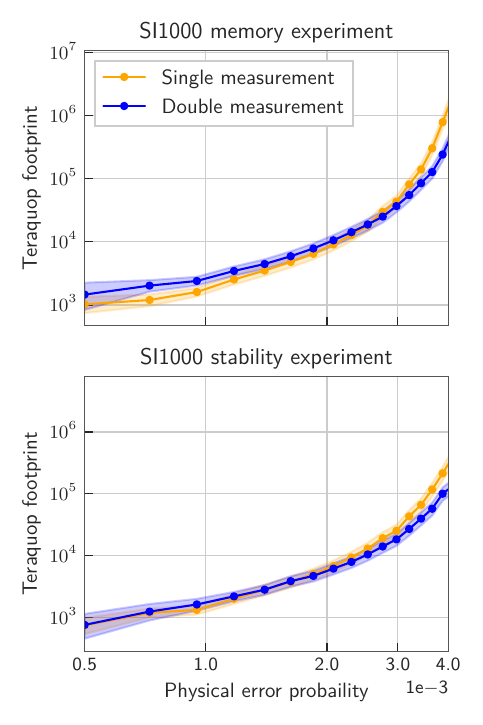}
    \caption{Teraquop plots obtained using memory and stability experiments with the SI1000 noise model.}
    \label{fig:SI1000_teraquop}
\end{figure}

\section{Distance 5 measurement schedules}
\label{sec:distance_5_measurement_schedules}
In the examples in Section \ref{sec:measurement_schedule}, we showed 
that it is relatively simple to design distance 3 measurement schedules.
Unfortunately, for higher distances it gets more complicated to design measurement schedules.
Only a handful of measurement schedules are known for distance 4 and 5 codes \cite{prabhu2021distance, delfosse2020short}.
In this section, we make a contribution to the collection of known measurement schedules by designing distance 5 schedules for color codes.
It has been shown that, for color codes of distance $d$, schedules using $O(d \log d)$ measurements exist \cite{delfosse2021beyond}, but constructing explicit schedules remains an open problem \cite{delfosse2020short}.

One type of systematic approach to designing measurement schedules is to have an initial non-fault-tolerant schedule and then add measurements to make it fault-tolerant.
Indeed this is what we did for the $[12,2,3]$ color code in Example \ref{example:distance_3_color_code_measurement_schedule}.
We started with a non-fault-tolerant schedule that measured a generating set of stabilizers, and then checked which stabilizers to add.
To apply this approach to distance 5 codes, a systematic way to generate an initial measurement schedule is needed.
A schedule that consists of measuring a generating set of stabilizers of a distance 5 code seems like a good starting point.
But, it turns out not to be.
The reason is that most pairs of errors cannot be corrected by a measurement schedule that measures a generating set, so figuring out the minimal set of stabilizers to add is difficult.
What about measuring a generating set of stabilizers twice?
This does enable finding a measurement schedule.
We showcase this approach in the example below.

\begin{example}[Distance 5 repetition code] $\triangleright$
A generating set of stabilizers for the distance 5 repetition code is $\{Z_1 Z_2, Z_2 Z_3, Z_3 Z_4, Z_4 Z_5\}$.
These stabilizers can each be measured twice in four rounds. 
This is not fault-tolerant, due to the existence of syndromes $\mathbf{s}$ for which $\forall \mathbf{e} \in \mathcal{E}(\mathbf{s},2)$
\begin{equation}
\abs{c(\mathbf{s}) \oplus \mathbf{e}_{\text{propagated}}} \leq \abs{\mathbf{e}_{\text{internal}}} \quad 
\label{eq:FTEC_again}
\end{equation}    
can not be satisfied.
Consider the pair of input errors $E_1$ and $E_2$ and the pair of internal errors $E_3$ and $E_4$ drawn below,
\begin{equation}
\includegraphics[valign=c, width=0.85\columnwidth]{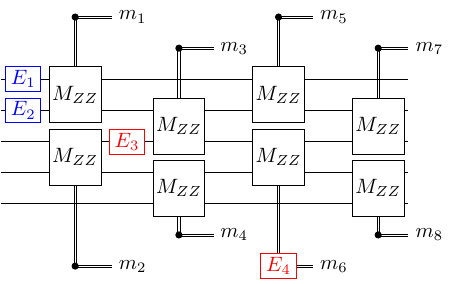}.
\end{equation}

These pairs of errors have the same syndrome, and a correction $c(\mathbf{s})$ satisfying Eq.~\eqref{eq:FTEC_again} does not exist. 
The measurement schedule can be made fault-tolerant by adding a fifth round.
One option for the fifth round is to measure $Z_1 Z_2$ and $Z_3 Z_4$.
\label{ex:5_rounds}
\triangleqed
\end{example}

A natural question to ask is whether the approach used in the previous example leads to the shortest possible measurement schedule for a given code.
The next example shows that the answer is no, as we find a measurement schedule for the distance 5 repetition code that uses only 4 rounds. 

\begin{example}[4 round schedule for the distance 5 repetition code] $\triangleright$
The following measurement schedule, which measures $Z_1 Z_5$ and $Z_3 Z_4$ in the third round, is fault-tolerant.
We numerically confirmed its fault tolerance by iterating over all error combinations of weight not exceeding 2.
\begin{equation}
\includegraphics[valign=c, width=0.85\columnwidth]{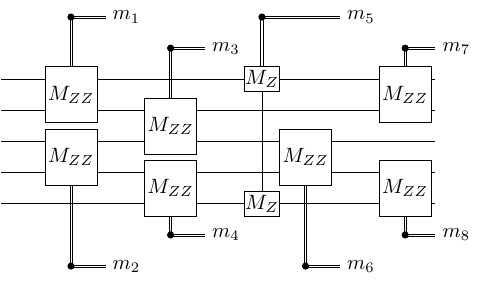}.
\end{equation}
\label{ex:4_rounds}
\triangleqed
\end{example}

Now that we have designed some measurement schedules for the repetition code, we shift focus to codes that can correct $X$ and $Z$ errors.
We focus on the distance 5 color code, whose small size makes its interesting for near-term experiments.
In particular, the [17,1,5] color code is the smallest distance 5 CSS code we are aware of.
In previous work, a measurement schedule for the [17,1,5] color code that uses 20 measurements in 9 time steps has been given in Ref.\ \cite{delfosse2020short}.
We show four different practical measurement schedules in the example below, including a schedule that employs only 17 measurements and 7 time steps.

\begin{example}[Schedule for the 17 qubit code] $\triangleright$
\label{example:17_1_5_short}
The measurement schedule below is the shortest one we have found for the [17,1,5] code. 
It uses 17 measurements in 7 steps. 
At timesteps 4, 5 and 6, products of plaquette stabilizers are measured.
The qubits involved in the measured products are represented by the grey vertices.

\begin{equation}
\begin{split}
\tikzfig{colour_code_circuits/17_1_5_non_local/step_1}{0.2}\;
\tikzfig{colour_code_circuits/17_1_5_non_local/step_2}{0.2}\;
\tikzfig{colour_code_circuits/17_1_5_non_local/step_3}{0.2}\;
\tikzfig{colour_code_circuits/17_1_5_non_local/step_4}{0.2}\\
\tikzfig{colour_code_circuits/17_1_5_non_local/step_5}{0.2}\;
\tikzfig{colour_code_circuits/17_1_5_non_local/step_6}{0.2}\;
\tikzfig{colour_code_circuits/17_1_5_non_local/step_7}{0.2}\hspace{1cm}
\end{split}
\end{equation}

Instead of a systematic approach, we took a more heuristic approach to find this schedule.
Specifically, this measurement schedule is inspired by the schedule in Example \ref{ex:4_rounds}.
If one looks at the 5 qubits on any of the three boundaries, the measurement schedule on just those data qubits looks similar to the four round measurement schedule of the repetition code. 
For example, consider the 5 qubits on the bottom boundary of the code. 
On these qubits, the measurements at steps 1, 3, 6, and 7 shown below are nearly the same as the measurements in the four round schedule of the repetition code.

\begin{equation}
\tikzfig{colour_code_circuits/17_1_5_non_local/sub_schedule_rep_code}{0.18}
\end{equation}
\triangleqed
\end{example}

\begin{example}[Local schedule for the 17 qubit code] $\triangleright$
For quantum computers with local connectivity, only local stabilizers can be measured. 
Below is a measurement schedule for the [17,1,5] code that only uses local stabilizers and uses 20 measurements in 8 steps.
This measurement schedule was inspired by the local measurement schedule for the repetition code in Example \ref{ex:5_rounds}.
\begin{equation}
\begin{split}
\tikzfig{colour_code_circuits/17_1_5_local_3_octagon/step_1}{0.2}\;
\tikzfig{colour_code_circuits/17_1_5_local_3_octagon/step_2}{0.2}\;
\tikzfig{colour_code_circuits/17_1_5_local_3_octagon/step_3}{0.2}\;
\tikzfig{colour_code_circuits/17_1_5_local_3_octagon/step_4}{0.2}\\
\tikzfig{colour_code_circuits/17_1_5_local_3_octagon/step_5}{0.2}\;
\tikzfig{colour_code_circuits/17_1_5_local_3_octagon/step_6}{0.2}\;
\tikzfig{colour_code_circuits/17_1_5_local_3_octagon/step_7}{0.2}\;
\tikzfig{colour_code_circuits/17_1_5_local_3_octagon/step_8}{0.2}
\end{split}
\end{equation}
\triangleqed
\end{example}

\begin{example}[Local schedule for the 17 qubit code with few high-weight measurements] $\triangleright$
In practice, it may be desirable to measure the weight 8 plaquette stabilizer of the [17,1,5] code as few times as possible.
Below we show a measurement schedule for the [17,1,5] code that uses 21 measurements in 8 steps, and measures the weight 8 stabilizer twice.
\begin{equation}
\begin{split}
\tikzfig{colour_code_circuits/17_1_5_local_2_octagon/step_1}{0.2}\;
\tikzfig{colour_code_circuits/17_1_5_local_2_octagon/step_2}{0.2}\;
\tikzfig{colour_code_circuits/17_1_5_local_2_octagon/step_3}{0.2}\;
\tikzfig{colour_code_circuits/17_1_5_local_2_octagon/step_4}{0.2}\\
\tikzfig{colour_code_circuits/17_1_5_local_2_octagon/step_5}{0.2}\;
\tikzfig{colour_code_circuits/17_1_5_local_2_octagon/step_6}{0.2}\;
\tikzfig{colour_code_circuits/17_1_5_local_2_octagon/step_7}{0.2}\;
\tikzfig{colour_code_circuits/17_1_5_local_2_octagon/step_8}{0.2}
\end{split}
\end{equation}
\triangleqed
\end{example}

\begin{example}[Schedule for the 19 qubit code]\label{example:19_1_5} $\triangleright$
Although the [19,1,5] uses two more data qubits, it may lead to lower overhead because it has weight 6 instead of weight 8 stabilizers.
Below, we present a measurement schedule comprising 19 measurements executed in 8 steps, with the maximum measurement weight being 6.
\begin{equation}
\begin{split}
\tikzfig{colour_code_circuits/19_1_5_non_local/step_1}{0.2}\;\;
\tikzfig{colour_code_circuits/19_1_5_non_local/step_2}{0.2}\;\;
\tikzfig{colour_code_circuits/19_1_5_non_local/step_3}{0.2}\;\;
\tikzfig{colour_code_circuits/19_1_5_non_local/step_4}{0.2}\\
\tikzfig{colour_code_circuits/19_1_5_non_local/step_5}{0.2}\;\;
\tikzfig{colour_code_circuits/19_1_5_non_local/step_6}{0.25}\;\;
\tikzfig{colour_code_circuits/19_1_5_non_local/step_7}{0.2}\;\;
\tikzfig{colour_code_circuits/19_1_5_non_local/step_8}{0.2}
\end{split}
\end{equation}
\triangleqed
\end{example}

\section{Justification of the phenomenological noise model}
\label{sec:justification_noise_model}

In Section \ref{sec:measurement_schedule} and Appendix \ref{sec:distance_5_measurement_schedules} we designed measurement schedules for correcting $X$ errors and measurement errors.
In this appendix we show how these schedules can be employed to create circuits resilient to circuit-level noise.
The resulting circuits are shown in Figure \ref{fig:qec_procedure_components}.

We begin by showing how to correct $XZ$ phenomenological noise, which includes single-qubit $X$ and $Z$ Pauli errors, as well as measurement errors. Following this, we show how to correct circuit-level noise.

\subsection{Correcting \texorpdfstring{$XZ$}{XZ} phenomenological noise}
The $Z$ measurement schedules in Section \ref{sec:measurement_schedule} were designed such that in combination with a decoder they satisfy;
if the sum of the number of $X_\text{input}$ errors and $X_\text{internal}$ phenomenological errors is at most $t$, then the output error's $X$ weight is at most the number of $X_\text{internal}$ phenomenological errors.
Here $X_\text{internal}$ errors can be measurement errors or $X$ data qubit errors.
Conveniently, the codes in Example \ref{example:distance_3_color_code_measurement_schedule} and Examples \ref{example:17_1_5_short}-\ref{example:19_1_5} are self-dual.
A consequence of this is that by copying the $Z$ measurement schedule and replacing the measurements of $Z$ operators by measurements of $X$ operators, we obtain an $X$ measurement schedule.
Naturally the fault-tolerant properties of the $Z$ measurement schedule carry over. 
We refer to a procedure which consist of an $X$ measurement schedule with a decoder followed by a $Z$ measurement schedule with a decoder as a CSS QEC procedure.
If a CSS QEC procedure is performed and the sum of $Z$ input errors and $Z$ internal errors ($Z$ data qubit errors and measurements errors on measurements of $X$ stabilizers) is at most $t$, then the output errors $Z$ weight is at most the number of $Z$ internal errors. 
The same holds for the output errors $X$ weight.

\subsection{Correcting circuit-level noise}

To make a fault-tolerant CSS QEC procedure resilient to circuit-level noise we replace the multi-qubit measurement gates with \textit{fault-tolerant syndrome extraction circuits}.

\begin{definition}[Fault-tolerant syndrome extraction circuit]
\label{def:FT_syndrome_extraction}
A syndrome extraction circuit $\mathfrak{C}$ is fault-tolerant if any $w$ circuit-level errors propagated to the end of $\mathfrak{C}$ results in an error with $X$ and $Z$ weight at most $w$.
\end{definition}

\begin{example}[Fault-tolerant syndrome extraction] $\triangleright$
An example of a fault-tolerant syndrome extraction circuit for measuring a weight 4 $Z$ operator is 
given by \footnote{using this circuit to perform syndrome extraction may not be practical for some architectures.}
\cite{shor1996fault}
\begin{equation}
\includegraphics[valign=c]{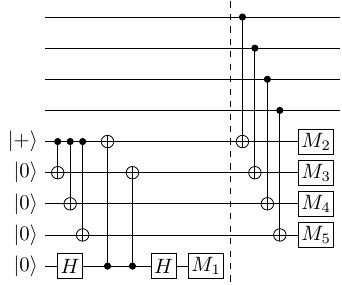}.
\label{eq:shor_syndrome_extraction_circ}
\end{equation}

Here, the gates after the dashed line are only performed if the first measurement has result $0$. 
If the measurement result is $1$, the circuit is started again. 
If we assume the top four qubits to be in the +1 eigenstate of the weight 4 $Z$ operator, we can define a detector $d_1 : m_2 \oplus m_3 \oplus m_4 \oplus m_5 = 0$.
\triangleqed
\label{example:ft_syndrome_extraction}
\end{example}

\begin{theorem}[Circuit-level noise resilience of CSS QEC procedure.]
If the measurements of a CSS QEC procedure are implemented using fault-tolerant syndrome extraction circuits, the resulting procedure satisfies Definition \ref{definition:FTEC_protocol} for circuit-level noise.
\label{theorem:circuit_level_noise_resilince}
\end{theorem}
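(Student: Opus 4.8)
The plan is to prove Theorem~\ref{theorem:circuit_level_noise_resilince} by \emph{reduction to the phenomenological case}: I would show that, as far as detectors, the decoder, and the output error are concerned, a circuit-level fault configuration on the CSS QEC procedure built from fault-tolerant syndrome extraction circuits is equivalent to an $XZ$ phenomenological fault configuration of no larger weight, and then invoke the fault-tolerance statement already established in the \emph{Correcting $XZ$ phenomenological noise} subsection above, applied separately to the $X$-type and $Z$-type errors. This equivalence is exactly the content of Fig.~\ref{fig:qec_procedure_components}b. To set it up, I would first observe that the circuit realizing the CSS QEC procedure is a concatenation of fault-tolerant syndrome extraction blocks $\mathfrak{C}_1,\dots,\mathfrak{C}_{n_M}$ — one per multi-qubit measurement of the $X$ schedule followed by the $Z$ schedule — with circuit-level input errors sitting before $\mathfrak{C}_1$ and every internal circuit fault lying inside exactly one block. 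Fix a fault configuration with $w=\#(\text{input})+\#(\text{internal})\le t$, and let $w_b$ be the number of internal faults inside block $\mathfrak{C}_b$, so that $\#(\text{input})+\sum_b w_b=w$.

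The core of the argument is the per-block reduction. By Definition~\ref{def:FT_syndrome_extraction}, propagating the $w_b$ faults of $\mathfrak{C}_b$ to its end yields a residual data-qubit Pauli whose $X$-weight and $Z$-weight are each at most $w_b$; in addition these faults may flip the single classical outcome bit $m_b$ reported by the block. Using the CSS structure — $X$-type and $Z$-type faults propagate independently through the CNOT/Hadamard/measurement gates, and $Z$-stabilizer detectors are moved only by $X$-type errors and vice versa — I would re-express this residual data on the qubits adjacent to round $b$ as phenomenological errors: a data error \emph{before} round $b$ (which, just as in the phenomenological model, both flips the relevant round-$b$ outcome and persists), a data error \emph{after} round $b$, or a measurement error on round $b$. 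The point to check carefully is that the weight is not inflated: each circuit fault carrying an $X$ component produces at most one phenomenological $X$-type error (a single-qubit $X$ data error, or a readout error on a $Z$-stabilizer measurement), and symmetrically for $Z$; hence the number of phenomenological $X$-internal errors is at most $\#(\text{internal})$, likewise for $Z$, and circuit-level input errors map the same way. Since the resulting phenomenological configuration produces the same detector syndrome, the decoder applies the same correction, and the propagated output error agrees with the circuit-level one up to a stabilizer.

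It then remains to invoke the phenomenological guarantee. In the phenomenological picture we have $\#(X\text{-input})+\#(X\text{-internal})\le w\le t$, so the $Z$ measurement schedule corrects such that the output $X$-weight is at most $\#(X\text{-internal})\le\#(\text{internal})$, and symmetrically the $X$ measurement schedule yields output $Z$-weight at most $\#(\text{internal})$. Minimizing over stabilizer-equivalent representatives, this is exactly the requirement of Definition~\ref{definition:FTEC_protocol} applied to the circuit-level noise model (Definition~\ref{def:circuit_level_noise}), which completes the proof.

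The hard part will be making the per-block equivalence rigorous — i.e.\ turning Fig.~\ref{fig:qec_procedure_components}b into a precise statement that the map from block-internal circuit faults to phenomenological errors preserves (i) the full detector syndrome, (ii) the output error up to stabilizer, and (iii) the $X$- and $Z$-counted fault weights without any blow-up. The two ingredients that make this go through are Definition~\ref{def:FT_syndrome_extraction}, which caps the residual weight emerging from each block, and the CSS decoupling of $X$ and $Z$ propagation, which lets the $X$-correction and $Z$-correction (and their separate weight budgets) be analysed independently, so that a single multi-component circuit fault never overspends either budget.
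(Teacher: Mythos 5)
Your proposal is correct and takes essentially the same route as the paper: the paper formalizes your ``per-block reduction'' as a notion of \emph{equivalent errors} (same syndrome, same propagated operator) and proves, by linearity and a case analysis on single circuit-level faults (trivial syndrome; nontrivial syndrome with identity, anticommuting, or commuting residual), that $w$ circuit-level errors in a fault-tolerant syndrome extraction circuit are equivalent to at most $w$ $X$-internal and $w$ $Z$-internal phenomenological errors, before invoking the phenomenological guarantee exactly as you do. Your weight-accounting claim (each fault's $X$ component yields at most one phenomenological $X$-type error, and symmetrically for $Z$) is precisely the content of that lemma, so the ``hard part'' you flag is carried out in the paper just as you anticipate.
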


Showing that the resulting circuit is resilient to all possible circuit-level errors sounds like a daunting task.
In this case, due to the notion of \textit{equivalent errors},  it is not.
We delay the proof of Theorem \ref{theorem:circuit_level_noise_resilince} until after we have defined equivalent errors and proven a useful lemma.

\begin{definition}[Equivalent errors]
Two sets of errors $\mathcal{E}_1$ and $\mathcal{E}_2$ that occur in Clifford circuits $\mathfrak{C}_1$ and $\mathfrak{C}_2$ with corresponding detector error matrices $H_1$ and $H_2$ are equivalent if:
\begin{itemize}
    \item $H_1 \mathbf{e}_1 = H_2 \mathbf{e}_2$, where $\mathbf{e}_1$ and $\mathbf{e}_2$ are the circuit vectors corresponding to $\mathcal{E}_1$ and $\mathcal{E}_2$.
    \item $\mathfrak{C}_1(\mathcal{E}_1) =  \mathfrak{C}_2(\mathcal{E}_2)$. $\mathfrak{C_i}(\mathcal{E}_j)$ denotes the operator found by propagating $\mathcal{E}_j$ to the end of $\mathfrak{C}_i$.
\end{itemize}
\label{def:equivalent_errors}
\end{definition}

That is to say, two sets of errors are equivalent if they have the same syndrome and result in the same propagated error.

\begin{example}[Equivalent errors]$\triangleright$
Consider $\mathcal{E}_1$ consisting of 2 circuit-level errors, a single-qubit $Z$ error and a two-qubit $Y \otimes X$ error occurring during the circuit $\mathfrak{C}_1$ from Example~\ref{example:ft_syndrome_extraction}

\begin{equation}
\includegraphics[valign=c, width=0.85\columnwidth]{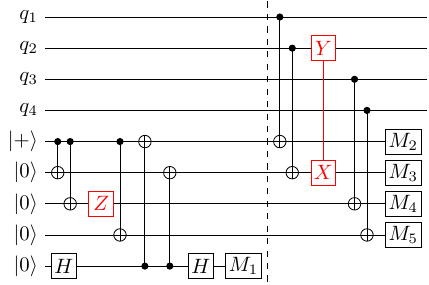}
\label{eq:example_error_shor}
\end{equation}

The $Y$ and $X$ errors are connected by a line, because $Y \otimes X$ after a CNOT is considered a single error in a circuit-level noise model.
The $Y \otimes X$ error flips the measurement $M_3$ and therefore violates the only detector $d_1 \colon m_2 \oplus m_3 \oplus m_4 \oplus m_5 = 0$. 
Thus $H_1 \mathbf{e}_1 = [1]$.
The error results in the propagated error $\mathfrak{C}(\mathcal{E}_1) = Y_{q_2} \otimes Z_{q_3}$ on the data qubits.

Now consider $\mathcal{E}_2$ consisting of  two $Z$ errors and one $X$ error occurring during the following circuit $\mathfrak{C}_2$
\begin{equation}
\includegraphics[valign=c]{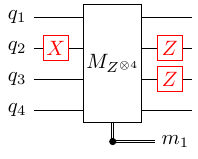}.
\end{equation}
If we assume the four qubits are in a $+1$ of $Z^{\otimes 4}$, we can define one detector: $d_1 \colon m_1 = 0$.
The errors flip the measurement and therefore violate the detector, i.e.,  $H_2 \mathbf{e}_2 = [1]$.
The errors propagate to $\mathfrak{C}_2(\mathcal{E}_2) = Y_{q_2} \otimes Z_{q_3}$.
Because $H_1 \mathbf{e}_1 = H_2 \mathbf{e}_2$ and $\mathfrak{C}_1 (\mathcal{E}_1) = \mathfrak{C}_2 (\mathcal{E}_2)$, $\mathcal{E}_1$ is equivalent to $\mathcal{E}_2$.
\triangleqed
\end{example}

We refer to a fault-tolerant syndrome extraction circuit as $\mathfrak{C}_\text{ft}$.
We refer to a syndrome extraction circuit that consists of a single multi-qubit measurement gate as $\mathfrak{C}_\text{pheno}$.
We refer to errors that can occur in the $XZ$ phenomenological noise model as \textit{phenomenological errors}.
It turns out that all circuit-level errors occurring during $\mathfrak{C}_\text{ft}$ are equivalent to all previously considered phenomenological errors occurring during $\mathfrak{C}_\text{pheno}$.

\begin{lemma}[All circuit-level errors]
Any set of $w$ circuit-level errors that can occur in $\mathfrak{C}_\text{ft}$ is equivalent to a set containing at most $w$ $X_\text{internal}$ phenomenological errors and at most $w$ $Z_\text{internal}$ phenomenological errors that can occur in $\mathfrak{C}_\text{pheno}$.
\label{lemma:equivalence}
\end{lemma}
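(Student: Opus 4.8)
The plan is to prove the equivalence directly from Definition~\ref{def:equivalent_errors} by matching, separately, the propagated data-qubit error and the syndrome produced in $\mathfrak{C}_\text{ft}$ and in $\mathfrak{C}_\text{pheno}$. Both circuits measure the same weight-$k$ $Z$ operator $g$; hence the only data qubits that $\mathfrak{C}_\text{ft}$ touches lie in $\mathrm{supp}(g)$, and --- as in the ``Equivalent errors'' example above --- each circuit contributes a single detector that records whether the measured value of $g$ was flipped (for $\mathfrak{C}_\text{ft}$ the parity of the auxiliary-qubit outcomes, e.g.\ $d_1$ in Example~\ref{example:ft_syndrome_extraction}; for $\mathfrak{C}_\text{pheno}$ the outcome $m_1$). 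Let $\mathcal{E}$ be a set of $w$ circuit-level errors occurring in $\mathfrak{C}_\text{ft}$, let $b\in\{0,1\}$ be the value of its syndrome, so $H_\text{ft}\mathbf{e}=[b]$, and let $E=\mathfrak{C}_\text{ft}(\mathcal{E})$ be the operator it propagates onto the data qubits. Since $\mathfrak{C}_\text{ft}$ is fault-tolerant, Definition~\ref{def:FT_syndrome_extraction} supplies a representative $E=X(\mathbf{a})\,Z(\mathbf{b})$, up to stabilizers and phase, with $|\mathbf{a}|\le w$ and $|\mathbf{b}|\le w$ and both $\mathbf{a},\mathbf{b}$ supported inside $\mathrm{supp}(g)$.

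Next I would build the matching phenomenological error set $\mathcal{E}'$ in $\mathfrak{C}_\text{pheno}$: include an $X_\text{internal}$ data-qubit error on every qubit $j$ with $a_j=1$ and a $Z_\text{internal}$ data-qubit error on every qubit $j$ with $b_j=1$. Propagated through the single measurement gate these give precisely $X(\mathbf{a})Z(\mathbf{b})=E$, so $\mathfrak{C}_\text{pheno}(\mathcal{E}')=\mathfrak{C}_\text{ft}(\mathcal{E})$. To also match the syndrome bit I use the fact that an $X$ error on $j\in\mathrm{supp}(g)$ flips $m_1$ exactly when it is placed just before the measurement gate and not when placed just after, both placements being internal error locations and both propagating to the same $X_j$. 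Thus if $|\mathbf{a}|\ge 1$, put an odd (if $b=1$) or even (if $b=0$) number of the chosen $X$ errors before the gate and the rest after; if $|\mathbf{a}|=0$ and $b=1$, instead add one $X_\text{internal}$ measurement error, which flips $m_1$ without changing the propagated operator; if $|\mathbf{a}|=0$ and $b=0$, no adjustment is needed.

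The counting then closes the argument. The set $\mathcal{E}'$ contains $|\mathbf{b}|\le w$ errors of type $Z_\text{internal}$. Its number of $X_\text{internal}$ errors is $|\mathbf{a}|$ except in the single case $|\mathbf{a}|=0,\,b=1$, where it is $1$; but $b=1$ forces at least one fault in $\mathcal{E}$, i.e.\ $w\ge 1$, so this count never exceeds $\max(|\mathbf{a}|,1)\le w$. Finally $H_\text{pheno}\mathbf{e}'=[b]=H_\text{ft}\mathbf{e}$ and $\mathfrak{C}_\text{pheno}(\mathcal{E}')=\mathfrak{C}_\text{ft}(\mathcal{E})$ by construction, so $\mathcal{E}$ and $\mathcal{E}'$ satisfy both conditions of Definition~\ref{def:equivalent_errors} and are equivalent.

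I expect the genuine work to lie not in this bookkeeping but in justifying the two ``single-detector'' reductions. For $\mathfrak{C}_\text{ft}$ one must argue that every detector other than the $g$-parity detector is either post-selected away (the cat-state verification outcome of Example~\ref{example:ft_syndrome_extraction} is conditioned to $0$) or is internal to the gadget and hence disappears once the gadget is collapsed to $\mathfrak{C}_\text{pheno}$, and that the repeat-until-accept step introduces no error locations that need separate accounting. The other delicate point is that the weights in Definition~\ref{def:FT_syndrome_extraction} are minimized over stabilizer-equivalent representatives, so the factorization $E=X(\mathbf{a})Z(\mathbf{b})$ only holds modulo the stabilizer group (which, after measuring $g$, contains $g$); this is harmless here because Definition~\ref{def:equivalent_errors} identifies propagated errors only up to the same equivalence.
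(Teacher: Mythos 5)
Your proof is correct, but it takes a genuinely different route from the paper's. The paper reduces to single circuit-level errors by linearity and then runs an exhaustive case analysis: it splits on the syndrome bit $H_\text{ft}\mathbf{e}\in\{[0],[1]\}$ and, in the nontrivial case, on whether the propagated operator is the identity, anti-commutes, or commutes with the measured stabilizer, matching each case by hand to a data-qubit error placed at the start or end of $\mathfrak{C}_\text{pheno}$ and/or a measurement error. You instead handle the whole set of $w$ errors at once: Definition~\ref{def:FT_syndrome_extraction} gives the factorization $X(\mathbf{a})Z(\mathbf{b})$ with $\abs{\mathbf{a}},\abs{\mathbf{b}}\le w$, you transplant these as data-qubit errors, and you absorb the single syndrome bit by choosing how many of the $X$ errors sit before versus after the measurement gate, falling back to a measurement error only when $\mathbf{a}=0$. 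This parity-placement trick unifies the paper's four cases into one construction (one can check it reproduces each of them at $w=1$) and dispenses with the linearity reduction. What the paper's version buys is explicitness: each case comes with a picture, and the classification of measurement errors on $Z$- (resp.\ $X$-) stabilizer measurements as $X_\text{internal}$ (resp.\ $Z_\text{internal}$) errors is spelled out per case, which is exactly what the proof of Theorem~\ref{theorem:circuit_level_noise_resilince} consumes. What yours buys is compactness and an honest accounting of where the real assumptions sit --- the single-detector structure of $\mathfrak{C}_\text{ft}$ after post-selecting the cat-state verification, and the stabilizer-equivalence caveat on propagated weights --- both of which the paper leaves implicit. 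Two small points to tidy: state the symmetric construction for $X$-stabilizer-measuring circuits (the paper's third subcase needs both orientations), and note that when $\mathfrak{C}_\text{pheno}$ is the first round of a schedule the ``before the gate'' location coincides with the input-error location, so the transplanted errors must still be counted as internal --- a convention the paper's own proof also relies on.
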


\begin{proof}
To prove Lemma \ref{lemma:equivalence}, we need only demonstrate, due to linearity, that all possible single circuit-level errors are equivalent to at most a single $X_\text{internal}$ phenomenological error and a single $Z_\text{internal}$ phenomenological error. 
We categorize circuit-level errors occurring during $\mathfrak{C}_\text{ft}$ in two groups depending on the value of $H_{ft} \mathbf{e}$.
Let's start by considering errors for which $H_{ft} \mathbf{e} = [0]$.
Due to the property of the fault-tolerant syndrome extraction circuit, for any $\mathcal{E}$ the $X$ weight of $\mathfrak{C}_\text{ft}(\mathcal{E})$ is at most 1 and the $Z$ weight of $\mathfrak{C}_\text{ft}(\mathcal{E})$ is at most 1.
Therefore the errors in this group are equivalent to at most one $X$ and one $Z$ error happening on data qubits at the end of $\mathfrak{C}_\text{pheno}$. 

Below, the fault-tolerant syndrome extraction circuit on the left side\footnote{For brevity, instead of drawing the gates to the left of the dotted line in Eq.~(\ref{eq:shor_syndrome_extraction_circ}), we have written $\ket{GHZ}$, which denotes the state those gates generate.} contains an example of a circuit-level error for which $H_{ft} \mathbf{e} = [0]$. 
The circuit on the right side shows the phenomenological error which the circuit-level error is equivalent to
\begin{equation}
\includegraphics[valign=c, width=0.85\columnwidth]{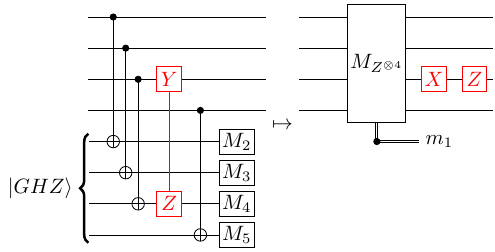}.
\end{equation}
The second group, comprising errors for which $H_{ft} \mathbf{e} = [1]$, is more complicated.
We split these errors into three subgroups:
\begin{enumerate}
\item Circuit-level errors for which $\mathfrak{C}_\text{ft}(\mathcal{E}) = I$.
These errors are equivalent to a measurement error in $\mathfrak{C}_\text{pheno}$.
An example of this case is shown below, where $E$ denotes a bit flip error on the measurement outcome
\begin{equation}
\includegraphics[valign=c, width=0.85\linewidth]{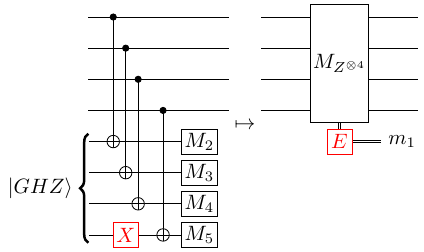}.
\end{equation}
\item Circuit-level errors for which $\mathfrak{C}_\text{ft}(\mathcal{E}) \neq I$ and $\mathfrak{C}_\text{ft}(\mathcal{E})$ anti-commutes with the measured operator.
If $\mathfrak{C}_\text{ft}(\mathcal{E}) = X_q$ or $\mathfrak{C}_\text{ft}(\mathcal{E}) = Z_q$, the error is equivalent to $X_q$ or $Z_q$, which are phenomenological errors, happening at the start of $\mathfrak{C}_\text{phen}$.
If $\mathfrak{C}_\text{ft}(\mathcal{E}) = Y_q$, 
the error is equivalent to the two errors $X_q$ and $Z_q$ happening at the start of $\mathfrak{C}_\text{phen}$. 
An example of this case is

\begin{equation}
\includegraphics[valign=c, width=0.85\linewidth]{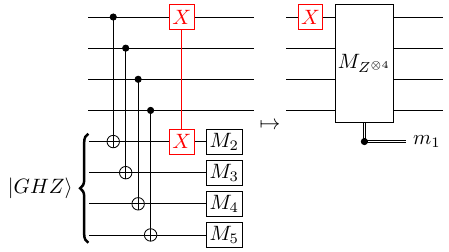}
\end{equation}

\item Circuit-level errors for which $\mathfrak{C}_\text{ft}(\mathcal{E}) \neq I$ and $\mathfrak{C}_\text{ft}(\mathcal{E})$ commutes with the measured operator.
If $\mathfrak{C}_\text{ft}(\mathcal{E}) = X_q$, $\mathcal{E}$ is equivalent to the following two phenomenological errors: $X_q$  at the end of $\mathfrak{C}_\text{pheno}$, which is an $X_\text{internal}$ error, and a measurement error on the measurement of an $X$ stabilizer, which is a $Z_\text{internal}$ error. Vice-versa if $\mathfrak{C}_\text{ft}(\mathcal{E}) = Z_q$. We do not consider $\mathfrak{C}_\text{ft}(\mathcal{E}) = Y_q$, as we are only considering syndrome extraction circuits measuring all $X$ or all $Z$ stabilizers here. Therefore if $\mathfrak{C}_\text{ft} = Y$ it anticommutes with the measured operator.
An example of this case is
\begin{equation}
\includegraphics[valign=c, width=0.85\linewidth]{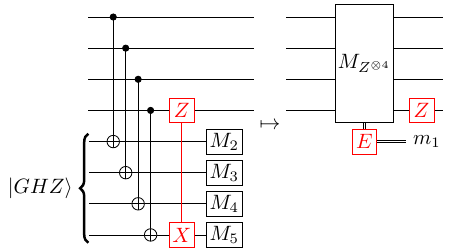}.
\end{equation}
\end{enumerate}
This concludes our proof as we have exhaustively shown that all possible single circuit-level errors are equivalent to at most a single $X$ internal phenomenological error and a single $Z$ internal phenomenological error. 
Now we are ready to bring everything together and prove Theorem \ref{theorem:circuit_level_noise_resilince}.
\end{proof}

\begin{proof}[Proof of Theorem \ref{theorem:circuit_level_noise_resilince}]
Lemma \ref{lemma:equivalence} states that all circuit-level errors are equivalent to a bounded number of errors in the XZ phenomenological noise model.
The bounded number of errors is exactly the amount for which it was shown in the previous section that the CSS QEC procedure is fault-tolerant.
Thus, the CSS QEC procedure constructed using measurement schedules and fault-tolerant syndrome extraction circuits is fault-tolerant for circuit-level noise.
\end{proof}

\section{Proofs of Theorems \ref{theorem:AGP_procedure_is_fault-tolerent} and \ref{thm:fault_tolerance_distance_d_new_procedure}}
\label{sec:comparison_logical_measurement}

\subsection{Proof of Theorem \ref{theorem:AGP_procedure_is_fault-tolerent}}
\label{sec:proof_AGP_d}

Before we prove Theorem \ref{theorem:AGP_procedure_is_fault-tolerent}, we prove the following intermediate Lemma.

\begin{lemma}[Lower bound to error number]
In the AGP procedure, it takes at least $2t+1$ errors to flip the observable without violating any detectors.
\label{lemma:agp_procedure_2t+1}
\end{lemma}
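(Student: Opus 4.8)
The plan is to reduce the claim to a counting argument about which logical-measurement (LM) outcomes can be flipped, carrying out the distance-$3$ Tanner-graph reasoning for general distance. Recall that the distance $d=2t+1$ procedure contains $2t+1$ LM components producing outcomes $m_1,\dots,m_{2t+1}$, interleaved with $2t$ QEC components, and that the natural generalisation of the distance-$3$ choice of constraints is the family of detectors $d_i:m_i\oplus m_{i+1}=0$ for $i=1,\dots,2t$ together with the observable $o_1:m_{2t+1}=0$; by Lemma~\ref{thm:freedom_in_choosing_detectors} the particular choice is immaterial, so it suffices to argue with this one. Relative to a noiseless run, the observable is flipped precisely when $m_{2t+1}$ is flipped, and $d_i$ is satisfied precisely when $m_i$ and $m_{i+1}$ are flipped the same number of times modulo $2$.

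First I would establish the structural fact that a \emph{single} error perturbs at most one of the outcomes $m_1,\dots,m_{2t+1}$. A measurement error at the $j$-th LM component flips only $m_j$. A single data-qubit error sitting just before that component — whether it is an input error, a fault occurring inside the LM component before its measurement (LM components are built from fault-tolerant syndrome-extraction circuits, Definition~\ref{def:FT_syndrome_extraction}), or the residual error left by one internal fault of the preceding QEC component (whose weight is at most the number of that component's internal errors, by Definition~\ref{definition:FTEC}) — also flips only $m_j$: it then enters the following QEC component carrying at most one input error and no internal errors, so Definition~\ref{definition:FTEC} forces its output error weight to $0$, and it cannot reach $m_{j+1}$. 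Hence, in the combined detector–observable incidence data, every error looks like an edge of the path graph whose vertices are $d_1,\dots,d_{2t}$ with an ordinary boundary attached to $d_1$ and the observable attached to $d_{2t}$: an error flipping $m_1$ touches only $d_1$, one flipping $m_j$ with $2\le j\le 2t$ touches $d_{j-1}$ and $d_j$, and one flipping $m_{2t+1}$ touches $d_{2t}$ and $o_1$.

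Given this, the counting is immediate. A set of $k$ errors flips the subset $S\subseteq\{1,\dots,2t+1\}$ consisting of those $m_j$ flipped an odd number of times, and $|S|\le k$. Violating no detector means $i\in S\iff i+1\in S$ for every $i$, hence $S=\emptyset$ or $S=\{1,\dots,2t+1\}$; flipping the observable forces $2t+1\in S$, so $S=\{1,\dots,2t+1\}$ and therefore $k\ge|S|=2t+1$. This is exactly the assertion of the lemma, and it plugs directly into the proof of Theorem~\ref{theorem:AGP_procedure_is_fault-tolerent}.

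The main obstacle is the structural step of the second paragraph: one must check that, within the global budget of at most $2t$ errors, faults occurring in different stages compose additively at the level of the LM outcomes, even though the QEC components are non-linear. The key is that each QEC component ``resets'' any low-weight incoming data error, so a fault cannot survive across a QEC component unless that component is itself faulty. Making this fully rigorous requires dealing with the edge cases — a $Y$-type residual error counting as one $X$ and one $Z$ data error, the distinction between data-qubit and measurement faults at an LM component, the first LM having no preceding QEC, and the last LM contributing only an output error — but none of them alters the conclusion that a single fault perturbs at most one LM outcome.
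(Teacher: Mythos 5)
Your proof is correct and follows essentially the same route as the paper's: both arguments reduce the procedure to the repetition-code Tanner graph given by the detectors $d_j\colon m_j\oplus m_{j+1}=0$ and the observable on $m_d$, and conclude that flipping the observable without violating any detector forces every one of the $2t+1$ LM outcomes to be flipped, each costing at least one error. The only difference is presentational — you count forward (each error flips at most one outcome, so $k\ge\lvert S\rvert=2t+1$) where the paper chains backward from the observable through successive detectors — and you spend more effort justifying the structural fact that each error touches at most one LM outcome, which the paper simply builds into its simplified component-level noise model.
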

\begin{proof}
To prove this we split the circuit into 3 parts.
The first part consists of the first LM component.
The middle part consists of $d-2$ repetitions of a QEC component followed by a LM component.
The final part consists of a QEC component followed by a LM 
component.
The circuit representation  of the procedure is 
\begin{equation}
\includegraphics[valign=c, width=0.85\columnwidth]{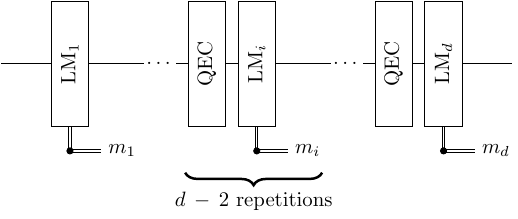}.
\end{equation}
There are $d-1$ detectors in the circuit, 
\begin{equation}
d_j: m_{j} \oplus m_{j+1} = 0, \quad j \in {1, \dots , d-1}
\end{equation}
and one observable, $o_1 : \text{m}_d$.

Now we will find the minimum number of errors that flips the observable without violating any detectors.
In the final part of the component, there are two choices for flipping the observable using as few errors as possible, either an input error before the component LM$_d$ or a measurement error during component LM$_d$
\begin{equation}
\includegraphics[valign=c]{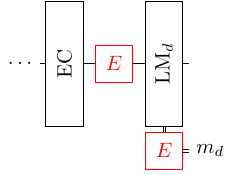}
\label{eq:errors_flipping_observable}.
\end{equation}
Both of these errors also violate the last detector $d_{d-1}$. 
To not violate this detector, an error needs to be used in the last repetition of the middle part.
But this error will violate the detector $d_{d-2}$. 
Therefore, an error will be needed in each of the $d-2$ rounds in the middle part of the procedure.
Finally, to not violate the first detector, an error will be needed in the first part of the procedure.
This gives a total of $2t+1$ errors needed to flip the observable without violating any detectors.
A Tanner graph of the procedure and the logical error we just described is 
\begin{equation}
\includegraphics[valign=c]{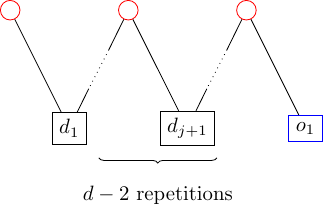}.
\end{equation}    
\end{proof}

Now we can prove Theorem \ref{theorem:AGP_procedure_is_fault-tolerent}.
\begin{proof}
According to Definition \ref{definition:fault_tolerant_logical_measurement_protocol}, for distance $d=2t+1$, two conditions must be satisfied; the outcome of the logical measurement should be correct if at most $t$ errors occur and the output error's weight should be at most $t$.
The latter condition is automatically satisfied due to the properties of the QEC and LM components.
The former condition is satisfied as we have proven Lemma \ref{lemma:agp_procedure_2t+1}, that states that it takes at least $2t+1$ errors to flip the observable without violating any detectors.
Because this takes $2t+1$ errors, $2t$ are detectable, and therefore $t$ errors are correctable.
Thus both conditions are satisfied, thereby proving Theorem \ref{theorem:AGP_procedure_is_fault-tolerent}.
\end{proof}

\subsection{Proof of Theorem \ref{thm:fault_tolerance_distance_d_new_procedure}}
\label{subsec:proof_new_d}

We will follow the structure of the proof in the previous section and again start by proving an auxiliary lemma.

\begin{lemma}[Second lower bound to error number]
In the proposed procedure using QED components, it takes at least $2t+1$ errors to flip the observable without violating any detectors.
\label{lemma:new_procedure_2t+1}
\end{lemma}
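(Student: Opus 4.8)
The plan is to mirror the proof of Lemma~\ref{lemma:agp_procedure_2t+1} for the AGP procedure. Write $t = (d-1)/2$ and lay out the procedure as one initial LM component, with outcome $r_0$, followed by $t$ iterations of the block QED$_i$, LM, QEC$_i$, LM, where the two LM components of iteration $i$ have outcomes $r_{2i-1}$ and $r_{2i}$ and QED$_i$ produces a detection flag $f_i$. Generalising the detectors chosen in the distance-$3$ case, take the $2t$ consistency detectors $d^{(1)}_j \colon r_{j-1} \oplus r_j = 0$ for $j = 1,\dots,2t$, the $t$ detection detectors $d^{(2)}_i \colon f_i = 0$ for $i = 1,\dots,t$, and the observable $o \colon r_{2t}$. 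Exactly as in the AGP argument, any set of errors that flips $o$ while violating none of the $d^{(1)}_j$ must, by telescoping from the observable end, flip every one of $r_0, r_1, \dots, r_{2t}$, and to avoid violating any $d^{(2)}_i$ it must flip every flag $f_i$ an even number of times.

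Next I would make precise a partition of the error locations into $t+1$ \emph{zones} $Z_0,\dots,Z_t$, cut at the $t$ QEC components, with the internal errors of QEC$_i$ assigned to the zone immediately following it. Using the defining property of a QEC component (Definition~\ref{definition:FTEC_protocol}) — that a single input-or-internal fault is corrected, leaving output weight $0$ — a data error originating in a zone is removed before it can cross into the next zone, whereas a QED component does not correct. Consequently errors in $Z_0$ can affect only $r_0, r_1, f_1$; errors in $Z_i$ for $1 \le i \le t-1$ only $r_{2i}, r_{2i+1}, f_{i+1}$; and errors in $Z_t$ only $r_{2t}$. The case analysis of what a single circuit-level error does (a data error that anti-commutes or commutes with the measured logical operator, a measurement error, or an internal error of a QEC or QED component) is the same bookkeeping already carried out in Appendix~\ref{sec:justification_noise_model}, and gives the two facts I need: a single error flips at most two of the $r_j$; and if it flips the two outcomes belonging to one zone $Z_i$, it was necessarily present at that zone's QED component, and since that QED sees no internal fault it detects the error, so $f_{i+1}$ is flipped.

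Finally I would count zone by zone. Zone $Z_t$ must contain at least one error, to flip $r_{2t}$. For each zone $Z_i$ with $0 \le i \le t-1$, both of its two outcomes must be flipped an odd number of times while its single flag must be flipped an even number of times; a single error in $Z_i$ either touches at most one of the two outcomes, or touches both but then flips the flag once and so violates $d^{(2)}_{i+1}$ — hence $Z_i$ contains at least two errors. Summing gives at least $2t$ errors from $Z_0,\dots,Z_{t-1}$ plus one from $Z_t$, i.e.\ at least $2t+1$ errors, which is the lemma; Theorem~\ref{thm:fault_tolerance_distance_d_new_procedure} then follows exactly as Theorem~\ref{theorem:AGP_procedure_is_fault-tolerent} followed from Lemma~\ref{lemma:agp_procedure_2t+1}. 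The part I expect to need the most care is setting up the zones rigorously: checking that no single circuit-level fault straddles a QEC boundary (this only uses the distance-$\ge 3$, i.e.\ single-fault, correction guarantee of a QEC component, not its full distance-$d$ strength) and placing the internal faults of the QEC and QED components in the correct zone, together with verifying the ``flips both outcomes of a zone $\Rightarrow$ flips that zone's flag'' step against the weak detection guarantee of a QED component, which is promised only when that component itself is fault-free — precisely the situation here.
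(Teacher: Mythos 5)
Your proposal is correct and follows essentially the same route as the paper: the same choice of detectors (consecutive LM outcomes must agree, QED flags must be zero) and observable, the same backward telescoping from $o$, and the same count of at least two errors per QED--LM--QEC--LM repetition plus one to flip the final outcome. Your zone decomposition and single-error case analysis simply make explicit the step the paper delegates to inspection of the Tanner graph, so it is a more carefully justified rendering of the identical argument rather than a different one.
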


\begin{proof}
The circuit representation of the proposed procedure for arbitrary $d$ is
\begin{equation}
\includegraphics[valign=c, width=0.85\columnwidth]{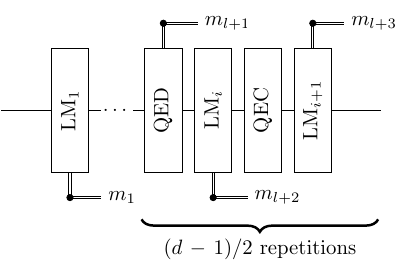}.
\label{eq:circuit_proposed_procedure}
\end{equation}
There are $(3d-3)/2$ detectors in the circuit,
\begin{align}
d_{l} \colon m_{l} \oplus m_{l+2} = 0 ,\nonumber\\
d_{l+1} \colon m_{l+1} = 0 \nonumber ,\\
d_{l+2} \colon m_{l+2} \oplus m_{l+3} = 0,\\
l \in 1,4, 7, \dots , (3d-7)/2, \nonumber
\end{align}
and one observable $o_1: m_{(3d-1)/2}$.

To create an error that flips the observable, again either a measurement error during the last LM component or an input error before the last LM component can be used.
Either of these errors violates the last detector. To not violate this detector, at least two errors are needed in each of the $(d-1)/2$ repetitions shown in Eq.~\eqref{eq:circuit_proposed_procedure}.
This can be seen by looking at a Tanner graph of the detector error matrix of the circuit. 
Below the circuit containing the individual errors that violate detectors and the corresponding Tanner graph is shown.

\begin{widetext}
\begin{equation}
\includegraphics[valign=c, width=0.9\linewidth]{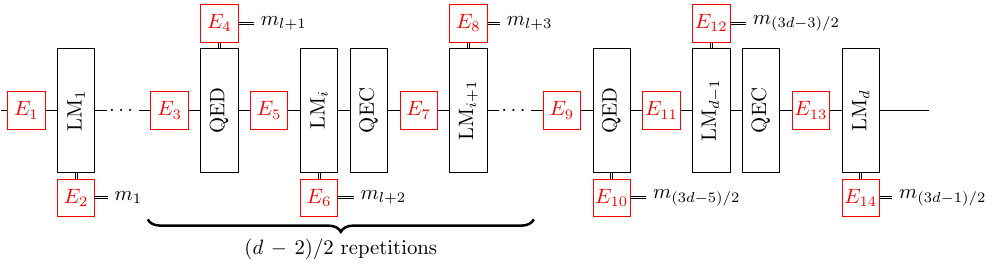}
\end{equation}
\begin{equation}
\includegraphics[valign=c, width=0.9\columnwidth]{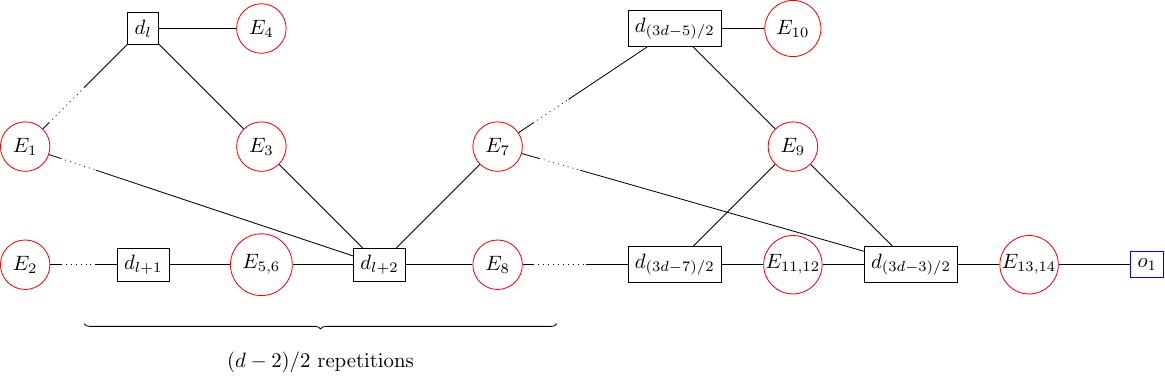}.
\end{equation}
\end{widetext}
We can conclude that $d$ errors are needed to flip the observable without violating any detectors. 
\end{proof}

Now our proof of Theorem \ref{thm:fault_tolerance_distance_d_new_procedure} is nearly identical to our proof of Theorem \ref{theorem:AGP_procedure_is_fault-tolerent}.

\begin{proof}
Again, two conditions must be satisfied; the outcome of the logical measurement should be correct if at most $t$ errors occur and the output error's weight should be at most $t$.
The latter condition is automatically satisfied due to the properties of the QEC, QED, and LM components.
It follows from Lemma \ref{lemma:new_procedure_2t+1} that the former condition is satisfied.
\end{proof}

\end{document}